\documentclass[aps,prx,twocolumn,showpacs,superscriptaddress]{revtex4-2}

\usepackage[colorlinks,linkcolor=blue,anchorcolor=green,citecolor=blue,urlcolor=cyan]{hyperref}
\usepackage{amsmath,amsfonts,amssymb,mathrsfs}
\usepackage{bm,bbm}
\usepackage{graphicx}
\usepackage{epstopdf}
\usepackage[dvipsnames]{xcolor}
\usepackage{verbatim}
\usepackage[normalem]{ulem}
\usepackage{mathtools}

\usepackage{algorithm}
\usepackage{algpseudocode}

\usepackage{amsthm}
\newtheorem{theorem}{Theorem}
\newtheorem{corollary}{Corollary}[theorem]

\usepackage{physics}

\newcommand{\umpsdI}{\hat{U}_{\rm MPS,I}^{\dagger}}
\newcommand{\umpsdII}{\hat{U}_{\rm MPS,II}^{\dagger}}
\newcommand{\calL}{\mathcal{L}}
\newcommand{\calC}{\mathcal{C}}

\usepackage{tikz}

\begin{document}

\title{Filter-Assisted Quantum Subspace Diagonalization via Wavefunction Sparsity Engineering}
\author{Han Xu}
\affiliation{Computational Materials Science Research Team, RIKEN Center for Computational Science (R-CCS), Hyogo 650-0047, Japan}
\email{han.xu@riken.jp}

\author{Tomonori Shirakawa}
\affiliation{Computational Materials Science Research Team, RIKEN Center for Computational Science (R-CCS), Hyogo 650-0047, Japan}
\affiliation{Quantum Computational Science Research Team, RIKEN Center for Quantum Computing (RQC), Saitama 351-0198, Japan}
\affiliation{Computational Condensed Matter Physics Laboratory, RIKEN Pioneering Research Institute (PRI), Saitama 351-0198, Japan}

\author{Seiji Yunoki}
\affiliation{Computational Materials Science Research Team, RIKEN Center for Computational Science (R-CCS), Hyogo 650-0047, Japan}
\affiliation{Quantum Computational Science Research Team, RIKEN Center for Quantum Computing (RQC), Saitama 351-0198, Japan}
\affiliation{Computational Quantum Matter Research Team, RIKEN Center for Emergent Matter Science (CEMS), Saitama 351-0198, Japan}
\affiliation{Computational Condensed Matter Physics Laboratory, RIKEN Pioneering Research Institute (PRI), Saitama 351-0198, Japan}

\begin{abstract}
    Subspace diagonalization techniques based on quantum sampling, such as quantum selected configuration interaction (QSCI) and sample-based quantum diagonalization (SQD), have recently emerged as promising quantum-centric approaches for approximating ground-state energies of many-body systems.
    However, their performance is fundamentally limited by an intrinsic trade-off between sampling efficiency and the sparsity of the ground-state wavefunction, which becomes particularly severe in strongly correlated systems. 
    Here, we introduce a filter-assisted SQD protocol that engineers wavefunction sparsity via a quantum filter, i.e., a unitary transformation of the Hamiltonian designed to concentrate the ground-state weight onto a small number of computational basis states. 
    Using the Gini coefficient as a robust sparsity measure, we establish a quantitative relationship between wavefunction sparsity and the resource requirements of SQD, providing theoretical bounds on the required subspace dimension and sampling cost.
    To realize the quantum filter, we employ a tensor-network-based circuit-encoding algorithm that maps target states to quantum circuits with controllable fidelity.
    We benchmark our approach on the quantum Ising model with transverse and longitudinal fields using both numerical simulations and quantum hardware experiments.
    Our results demonstrate that, compared with standard SQD, the proposed protocol significantly enhances wavefunction sparsity, reduces ground-state energy estimation errors by orders of magnitude, and substantially lowers sampling overhead. 
    These findings establish filter-assisted subspace diagonalization as a powerful and scalable framework for quantum many-body calculations in the strongly correlated regime.
\end{abstract}

\date{\today}

\maketitle

\section{Introduction}


Accurate determination of ground states and their energies is a central problem in quantum many-body physics, with broad applications in condensed matter physics~\cite{Sachdev2011,Wang2024Certifying} and quantum chemistry~\cite{Lanyon2010,lee2023evaluating}.
Subspace diagonalization provides a general framework for approximating low-lying eigenstates of a large Hermitian operator $\hat H$ by projecting the full Hilbert space onto a reduced subspace spanned by a selected set of basis states. 
Within this subspace, the Hamiltonian (and, if necessary, overlap) matrices are constructed and diagonalized to obtain approximate eigenvalues and eigenvectors~\cite{Huggins2020Non}.
The accuracy and efficiency of subspace diagonalization critically depend on the choice of basis states, making the construction of an effective subspace a central challenge in practice.

In classical computation, the (modified) Lanczos method recursively generates a two-dimensional subspace $\mathrm{span}(\ket{\nu_0}, \hat H\ket{\nu_0})$ to iteratively minimize the energy, or more generally constructs a Krylov subspace $\mathrm{span}(\ket{\nu_0}, \hat H\ket{\nu_0},\dots,\hat H^{k-1}\ket{\nu_0})$ for diagonalization~\cite{lanczos1950iteration,koch2011}.
In quantum computation, subspace diagonalization is naturally suited to quantum-classical hybrid algorithms: a quantum processor can efficiently prepare and manipulate quantum states in the full Hilbert space, 
while a classical processor performs diagonalization within a reduced subspace to numerical precision.
Several such hybrid schemes have been proposed to systematically expand the subspace, using quantum circuits based on Hamiltonian powers~\cite{seki2021power}, real-time evolution~\cite{stair2020multireference,Cortes2022Krylov,parrish2019quantum,cohn2021filter,Bespalova2021}, and imaginary-time evolution~\cite{motta2020determining,yeter2020practical}, as well as eigenvector continuation~\cite{frame2018ec,francis2022subspace} .

More recently, subspace diagonalization schemes in which the subspace is spanned by quantum-selected computational basis states have emerged as promising approaches for near-term quantum computing. 
A representative example is quantum selected configuration interaction (QSCI)~\cite{kanno2023quantum}, and closely related sample-based quantum diagonalization (SQD) methods have subsequently been developed and demonstrated~\cite{robledo2024chemistry_pub}. 
In particular, by introducing a self-consistent configuration recovery technique, Ref.~\cite{robledo2024chemistry_pub} achieved a large-scale demonstration on molecular systems beyond the scale accessible to exact diagonalization. 
Numerous variants of this framework have since been independently proposed~\cite{yu2025quantum,sugisaki2024hamiltonian,mikkelsen2024quantum,nogaki2025symmetry,danilov2025enhancing,barison2025quantum,duriez2025computing,ohgoe2025quantum}. 
Furthermore, SQD methods have recently enabled even larger-scale quantum--HPC hybrid computations~\cite{shirakawa2025a,Merz2016}.

Despite these recent successes, sample-based approaches face a fundamental trade-off between sampling efficiency and the sparsity of the ground-state wavefunction~\cite{reinholdt2025fundamental}. 
If the wavefunction is highly sparse and dominated by a few configurations, repeated sampling tends to return the same dominant configurations, making it difficult to identify new basis states for subspace expansion. 
Conversely, if the wavefunction is less sparse and more broadly distributed, the quantum resources required to reach a given accuracy may grow rapidly, potentially exponentially, with system size. 
Crucially, the ground-state wavefunction itself is not necessarily the optimal distribution for sampling basis states in subspace diagonalization. 
Ideally, an optimal sampler would yield an approximately uniform distribution \emph{only} over the basis states relevant to the target subspace. 
This observation suggests that the trade-off can be mitigated by engineering the sampling distribution via a transformation of the Hamiltonian. 
In particular, one may transform the Hamiltonian such that its ground state becomes sparse in the computational basis, design a sampler tailored to subspace construction, and then apply the standard sample-based protocol to the transformed Hamiltonian.

In this paper, we present a filter-assisted method that engineers the sparsity of the ground-state wavefunction relevant to quantum sampling.
A schematic comparison between standard SQD and the proposed filter-assisted SQD (FSQD) protocol is shown in Fig.~\ref{fig:sqd-alg}. 
Within this framework, we consider two variants, depending on whether a projection is introduced to remove the dominant zero-state component, i.e., $|0\rangle^{\otimes n}$. 
Without this projection, samples from the filtered ground state are strongly concentrated on the zero state, leading to poor sampling efficiency and thereby exposing the intrinsic trade-off of sample-based subspace construction discussed above. 
By contrast, introducing the projection suppresses the zero-state contribution and allows the sampler to explore basis states useful for subspace expansion. 
As a result, the FSQD protocol achieves more accurate ground-state energy estimates with substantially lower measurement overhead than standard SQD. This is demonstrated using classical simulations and quantum-hardware experiments on the quantum Ising model with transverse and longitudinal fields for systems of up to 100 qubits. 

The remainder of this paper is organized as follows. 
In Sec.~\ref{sec:fsqd}, we briefly review the standard SQD method and outline our filter-assisted approach. 
In Sec.~\ref{sec:sparsity}, we introduce the Lorenz curve and Gini coefficient as measures of wavefunction sparsity and establish theoretical bounds relating sparsity to the resource requirements of SQD. 
In Sec.~\ref{sec:automatic}, we present the quantum circuit encoding algorithm used to construct the circuit filter. 
In Sec.~\ref{sec:ising-model}, we benchmark the proposed method against standard SQD for the quantum Ising model with transverse and longitudinal fields, comparing their performance in terms of ground-state energy estimation error and the number of samples required to achieve a given accuracy. 
We further perform an energy-variance analysis and use extrapolation to improve the accuracy of the estimated ground-state energy.
In Sec.~\ref{sec:expr}, we demonstrate both protocols on IBM quantum hardware. 
Finally, we summarize our results in Sec.~\ref{sec:summary}. 
Appendix~\ref{appx:I} proves the monotonic decrease of the ground-state energy estimate with increasing subspace dimension. 
Appendix~\ref{appx:II} derives relations between wavefunction sparsity and the computational resources required in the standard SQD method. 
Appendix~\ref{appx:III} discusses the performance of the MPS-based circuit encoding methods. 
Appendix~\ref{appx:IV} provides additional details on the quantum-hardware experiments.

\section{Filter-Assisted Subspace Diagonalization Method}\label{sec:fsqd}

\begin{figure*}[tb]
    \includegraphics[width=\linewidth]{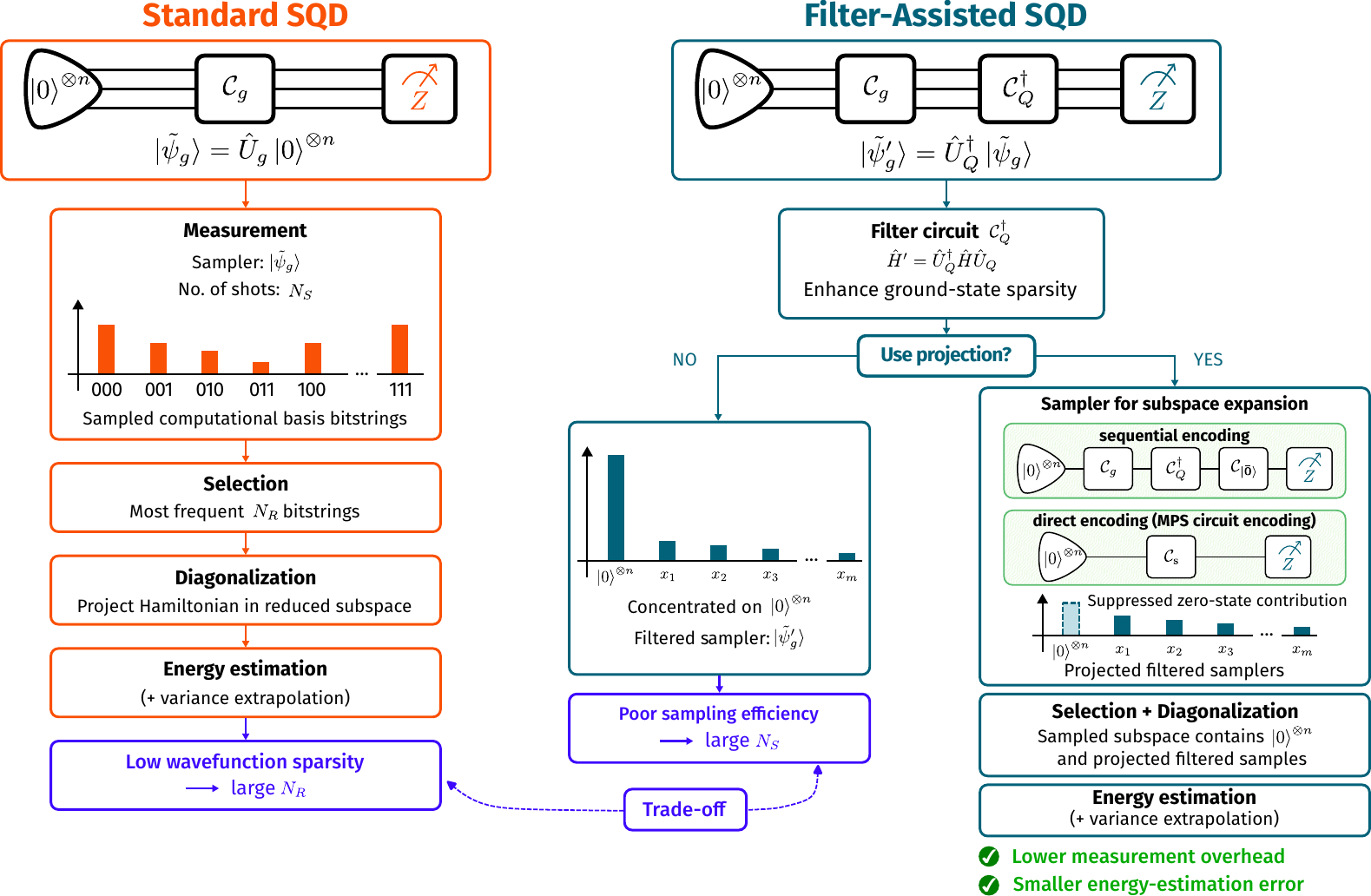}
    \caption{
    Schematic comparison between the standard SQD workflow and the proposed filter-assisted SQD (FSQD) protocol.
    In both methods, a truncated Hamiltonian matrix is constructed and diagonalized in the subspace spanned by the $N_R$ most frequently sampled computational basis states.
    In FSQD, a designed filter circuit $\calC_Q^{\dagger}$ (unitary $\hat U_Q^{\dagger}$) transforms the original Hamiltonian $\hat H$ into the filtered Hamiltonian $\hat H'\coloneqq \hat U_Q^{\dagger}\hat H\hat U_Q$, whose ground state is more strongly concentrated on the computational basis state $\ket{\bm{0}}=|0\rangle^{\otimes n}$. 
    To avoid excessive sampling of the dominant $\ket{\bm 0}$ component, samplers for subspace expansion are constructed from the projected state---either by directly encoding the normalized projected state or by sequentially encoding $\hat{U}_Q^\dagger$ followed by a unitary approximation to the projection step. 
    This suppresses the zero-state contribution and mitigates the trade-off between wavefunction sparsity and sampling efficiency inherent to the standard SQD.
    }
    \label{fig:sqd-alg}
\end{figure*}

In the standard SQD protocol~\cite{kanno2023quantum,robledo2024chemistry_pub}, a quantum state $\ket*{\tilde\psi_g}$ is measured in the computational basis. Here, $\ket*{\tilde\psi_g}$ serves as a sampler and is typically chosen to approximate the target ground state $\ket{\psi_g}$ of Hamiltonian $\hat H$. We assume that $\ket*{\tilde\psi_g}$ is prepared on a quantum computer by applying a quantum circuit $\calC_g$ (with the associated unitary operator $\hat U_g$) to the zero state $\ket{\bm{0}}=|0\rangle^{\otimes n}$ of an $n$-qubit system, i.e., $\ket*{\tilde\psi_g}=\hat U_g\ket{\bm{0}}$. 
Each measurement yields a bitstring $x\in \{0,1\}^{n}$.
After performing $N_S$ measurement shots, one obtains a set of $N_S$ bitstrings, generally with repetitions. 
The frequency of each distinct bitstring is then counted, and the $N_R$ most frequent bitstrings are selected to define a truncated Hilbert subspace. 
Within this sampled subspace, the projected Hamiltonian is constructed and exactly diagonalized to estimate the ground-state energy.

In an idealized setting, one can show that increasing the subspace dimension $N_R$ leads to a monotonically decreasing estimate of the ground-state energy, thereby implying convergence toward the exact ground-state energy. 
The proof is based on extending the sampled subspace from $N_R$ to $N_R+1$ and then analyzing the projected Hamiltonian within an associated two-dimensional Lanczos subspace. 
Full details are given in Appendix~\ref{appx:I}.

In practice, however, the state $\ket*{\tilde\psi_g}$ expressed in the computational basis is often not sufficiently sparse. 
As a result, repeated bitstrings becomes rare even for large $N_S$,  making it difficult to identify the most relevant basis states and degrading the performance of the SQD protocol. 
A quantitative analysis of this effect, together with rigorous bounds connecting SQD accuracy to wavefunction sparsity, is presented in Sec.~\ref{sec:sparsity} and Appendix~\ref{appx:II}.

The central idea of FSQD is to engineer the sampling distribution by applying a unitary transformation to the Hamiltonian. 
Since subspace diagonalization aims to approximate the ground-state energy $\expval*{\hat{H}}{\psi_g}$, it is natural to exploit the  
invariance of this quantity under a similarity transformation:
\begin{equation}\label{eq:sqd-expval}
    \expval*{\hat{H}}{\psi_g} = \mel*{\psi_g}{\hat U_Q\hat U_Q^\dagger\hat{H}\hat U_Q\hat U_Q^\dagger}{\psi_g}.
\end{equation}
Here $\hat{U}_Q$ is the unitary implemented by a designed quantum circuit $\mathcal{C}_Q$, and 
\begin{equation}
\ket{\psi_Q}=\hat U_Q\ket{\bm 0}
\end{equation}
is a classically constructed approximation to the true ground state $|\psi_g\rangle$. 
%

Although $\ket{\psi_Q}$ may only be a coarse approximation, the transformed Hamiltonian
\begin{equation}
\hat{H}' \coloneqq \hat{U}_Q^\dagger \hat{H} \hat{U}_Q
\end{equation}
is expected to have a ground state
\begin{equation}
\ket{\psi_g'} \coloneqq \hat{U}_Q^\dagger \ket{\psi_g}
\end{equation}
and an associated sampler 
\begin{equation}
  \ket{\tilde\psi_g'} \coloneqq \hat{U}_Q^\dagger \ket{\tilde\psi_g} = \hat{U}_Q^\dagger \hat U_g \ket{\bm0},
\end{equation}
both of which are substantially more sparse in the computational basis than $\ket{\psi_g}$ and $\ket{\tilde\psi_g}$, respectively, and in particular more strongly concentrated on $\ket{\bm 0}$.
In this sense, the inverse circuit $\mathcal{C}_Q^\dagger$ acts as a quantum filter on $\hat{H}$.
We then apply the standard SQD procedure to the filtered Hamiltonian $\hat{H}'$ and refer to the resulting protocol as FSQD.

In this work, the circuit $\calC_Q$ is obtained classically by using a tensor-network method to encode an approximate representation of the ground state $\ket{\psi_g}$ into a quantum circuit. 
With this encoding, the zero state $\ket{\bm 0}$ corresponds to the approximate ground state of $\hat H'$, with its energy inherited from the underlying classical tensor-network calculation. 
Consequently, once the sampled subspace includes $\ket{\bm 0}$, FSQD using the sampler $\ket{\tilde\psi_g'}$ can already reproduce this tensor-network estimate and may further improve upon it through subspace expansion. 
A schematic overview of the quantum circuit encoding algorithm is shown in Fig.~\ref{fig:tn-alg}, and the method is discussed in detail in Sec.~\ref{sec:automatic}.

Because the filtered sampler $\ket{\tilde\psi_g'}$ is expected to inherit the strong concentration of $\ket{\psi_g'}$ on $\ket{\bm 0}$, FSQD can in principle achieve high accuracy even when the sampled subspace contains only a small number of basis states. 
However, this same concentration also severely limits subspace expansion through repeated measurements, since the bitstring $\bm 0$ is sampled overwhelmingly often. 
This is precisely the sparsity--sampling trade-off discussed above and highlighted in Ref.~\cite{reinholdt2025fundamental}.

To overcome this difficulty, we separate the role of the dominant $\ket{\bm 0}$ component from that of the remaining basis states useful for subspace expansion. 
Inserting the resolution of identity
\begin{equation}
\hat{I} = \dyad{\bm 0} + \hat{P}_{\ket{\bar{\bm 0}}},
\end{equation}
with
\begin{equation}
\hat{P}_{\ket{\bar{\bm 0}}} \coloneqq \hat{I} - \dyad{\bm 0},
\end{equation}
into Eq.~\eqref{eq:sqd-expval}, we obtain 
\begin{equation}\label{eq:sqd-expval-proj}
    \begin{aligned}
        \expval*{\hat{H}}{\psi_g} =& 
        \mel*{\psi_g}{\hat U_Q\hat U_Q^\dagger\hat{H}\hat U_Q\dyad{\bm0}\hat U_Q^\dagger}{\psi_g}\\
        +& \mel*{\psi_g}{\hat U_Q\hat U_Q^\dagger\hat{H}\hat U_Q\hat P_{\ket{\bar{\bm0}}}\hat U_Q^\dagger}{\psi_g},
    \end{aligned}
\end{equation}
where $\hat{P}_{\ket{\bar{\bm 0}}}$ projects onto the subspace orthogonal to $\ket{\bm 0}$.
This decomposition motivates the use of a projected sampler,
\begin{equation}
\hat{P}_{\ket{\bar{\bm 0}}} \hat{U}_Q^\dagger \ket{\tilde\psi_g},
\end{equation}
together with the basis state $\ket{\bm 0}$, to construct an extended sampled subspace. 
In other words, rather than sampling directly from $\ket{\tilde\psi_g'} = \hat{U}_Q^\dagger \ket{\tilde\psi_g}$, which is overly concentrated on $\ket{\bm 0}$, we use the projected state to explore the nontrivial basis states relevant for subspace expansion. 
For the numerical simulations presented in Sec.~\ref{sec:ising-model}, tensor-network methods are used to represent the Hamiltonians $\hat{H}$ and $\hat{H}'$, as well as the samplers $\ket{\tilde\psi_g}$ and $\hat{P}_{\ket{\bar{\bm 0}}}\hat{U}_Q^\dagger \ket{\tilde\psi_g}$.

For completeness, we conclude this section with three remarks on possible experimental implementations of the projected sampler. 
First, the projector $\hat P_{\ket{\bar{\bm0}}}$ may be embedded into a unitary through block-encoding techniques \cite{lin2022lecture} and appended to the circuit $\calC_Q^{\dagger}$.
Second, one may directly construct an ansatz $\ket{\Psi_{\rm s}}$ that approximates the normalized state $\hat P_{\ket{\bar{\bm0}}}\hat U_{Q}^{\dagger}\ket{\tilde\psi_g}$, in close analogy with Ref.~\cite{robledo2024chemistry_pub}, making the approach compatible with current quantum hardware.
Third, given an input state $\hat U_{\rm Q}^{\dagger}\ket{\tilde\psi_g}$ and a target output state $\hat P_{\ket{\bar{\bm0}}}\hat U_{\rm Q}^{\dagger}\ket{\tilde\psi_g}$, one may approximate the non-unitary projector by a unitary $\hat U_{{\ket{\bar{\bm0}}}}$, 
such that 
\begin{equation} \label{eq:U0}
\hat{P}_{\ket{\bar{\bm 0}}}\hat{U}_Q^\dagger\ket{\tilde\psi_g}
\approx
\hat{U}_{\ket{\bar{\bm 0}}}\hat{U}_Q^\dagger\ket{\tilde\psi_g},
\end{equation}
up to normalization.
The performance of this approximation is assessed in Appendix~\ref{appx:III}, and the corresponding quantum-hardware results are presented in Sec.~\ref{sec:expr}.

\section{Wavefunction Sparsity Measure}\label{sec:sparsity}

In this section, we introduce quantitative measures of wavefunction sparsity that are directly relevant to the resource requirements of SQD. 
Our goal is not merely to characterize sparsity itself, but to relate the probability distribution of measured computational basis states to the two key costs of SQD: the required sampled-subspace dimension $N_R$ and the number of measurement shots $N_S$. 
For this purpose, we employ the Lorenz curve and the Gini coefficient, which quantify how strongly the measurement distribution is concentrated on a small number of bitstrings.
These quantities allow us to derive bounds on the sampling overhead of standard SQD and, in turn, to clarify how FSQD can mitigate this overhead by enhancing wavefunction sparsity and reducing the resources required for subspace construction.

Suppose that the normalized quantum state being measured is written in the computational basis as 
\begin{equation}
\ket{\psi}=\sum_{I=1}^N c_I\ket{x_I}, 
\end{equation}
where $x_I\in\{0,1\}^n$ denotes a bitstring for an $n$-qubit system and $N=2^n$. 
Here the index $I\in\{1,\dots,N\}$ is chosen such that $I-1$ is the base-10 integer corresponding to the bitstring $x_I$. 
The vector 
\begin{equation}
    \vec{\psi}^{\circ2}\coloneqq (|c_1|^2,|c_2|^2,\dots,|c_I|^2,\dots,|c_N|^2)
\end{equation}
therefore represents the measurement probability distribution over computational basis states, and satisfies 
\begin{equation}
\sum_{I=1}^{N}|c_I|^2=1.
\end{equation}
An example of such a weight distribution is shown in Fig.~\ref{fig:sqd-filter}(c).

To characterize the concentration of this probability distribution, we sort the entries in non-decreasing order,
\begin{equation}
|c_1|^2\leqslant|c_2|^2\leqslant\dots\leqslant|c_N|^2.
\end{equation}
We then define the cumulative weight by 
\begin{equation}
\mathcal{L}\!\left(\frac{I}{N}\right)
\coloneqq
\sum_{J=1}^{I}|c_J|^2,
\qquad I=0,1,\dots,N,
\end{equation}
with the convention $\mathcal{L}(0)=0$, and regard $\mathcal{L}(x)$ for $x\in[0,1]$ as the piecewise-linear interpolation of these discrete values.
When plotted as a function of the normalized index $x=I/N$, $\mathcal{L}(x)$ defines the Lorenz curve, a standard tool for visualizing concentration and inequality~\cite{lorenz1905,Joseph1971Lorenz}.
In the original economic setting, $x$ represents the cumulative fraction of the population and $\mathcal{L}(x)$ the corresponding cumulative fraction of wealth. A more concentrated probability distribution corresponds to a Lorenz curve that deviates more strongly from the $45^\circ$ line (see Fig.~\ref{fig:lorenz-gini}).

Because the probabilities are sorted in non-decreasing order, the slope of $\mathcal{L}(x)$ is also non-decreasing.
More precisely, on the interval $x\in[(I-1)/N,I/N]$, one has
\begin{equation}
\mathcal{L}'(x)=N|c_I|^2,
\end{equation}
whenever the derivative exists.
Therefore, $\mathcal{L}(x)$ is a convex function.


A closely related and widely used scalar measure of concentration is the Gini coefficient~\cite{Gini1921,Dalton1920}.  
Graphically, it is given by twice the area between the Lorenz curve and the $45^{\circ}$ line, as illustrated in Fig.~\ref{fig:lorenz-gini}. 
Here, the $45^{\circ}$ line corresponds to the cumulative weight distribution of the equal superposition state $\ket{+}^{\otimes n}$, for which all computational basis states have equal probability. 
The Gini coefficient is given by~\cite{Hurley2009Gini}
\begin{equation}
    G = 1 - 2\sum_{I=1}^{N} \frac{|c_I|^2}{\Vert\vec{\psi}^{\circ2}\Vert_1}\left(\frac{N-I+\frac{1}{2}}{N}\right),
\end{equation}
where $\Vert\vec{\psi}^{\circ2}\Vert_1\equiv\sum_{I=1}^N |c_I|^2$ is the $\ell^1$ norm of $\vec{\psi}^{\circ2}$, and hence $\Vert\vec{\psi}^{\circ2}\Vert_1=1$ in our case because $\vec{\psi}^{\circ2}$ is a normalized probability vector.
The maximum value,
\begin{equation}\label{eq:most-sparse}
    \max(G) = 1-2^{-n},
\end{equation}
is attained for the sparsest possible distribution, in which all probability weight $|c_I|^2$ resides on a single computational basis state.
Thus, larger values of $G$ correspond to a more concentrated measurement distribution and hence to greater wavefunction sparsity. 
In contrast to the full Lorenz curve, the Gini coefficient provides a single scalar quantity that is particularly convenient for comparing different system sizes and for analyzing the scaling of sampling overhead.

\begin{figure}[tb]
    \includegraphics[width=\linewidth]{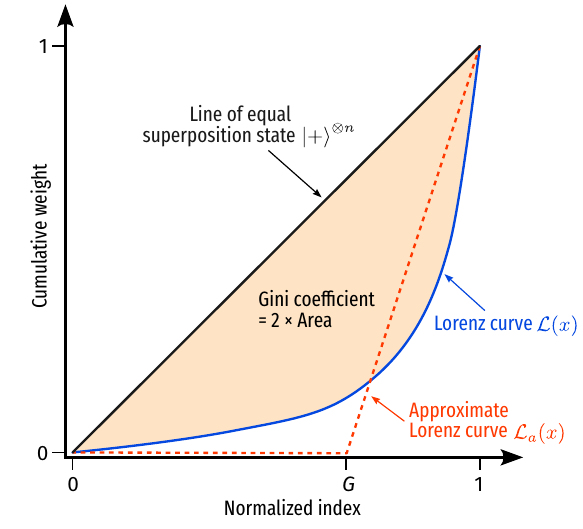}
    \caption{
      Schematic diagram of the Lorenz curve $\mathcal{L}(x)$ and the graphical interpretation of the Gini coefficient $G$. The diagonal line corresponds to the Lorenz curve for the equal-superposition state $\ket{+}^{\otimes n}$, and the approximate Lorenz curve $\mathcal{L}_a(x)$ is shown as a red dashed line. 
    }
    \label{fig:lorenz-gini}
\end{figure}


We also note that previous studies~\cite{robledo2024chemistry_pub,yu2025quantum} have used the sparsity measures $\alpha_L$ and $\beta_L$, defined by 
\begin{equation}
\sum_{J=N-L+1}^N |c_J|^2 \ge \alpha_L
\end{equation}
and
\begin{equation}
|c_{N-L+1}|^2 \ge \beta_L,
\end{equation}
to establish the convergence of the SQD algorithm. 
In fact, $\alpha_L$ is directly related to the cumulative weight $\calL(x)$ through 
\begin{equation}
    1-\mathcal{L}\!\left(\frac{I}{N}\right) = \sum_{J=I+1}^{N}|c_J|^2 \geqslant \alpha_{N-I}.
\end{equation}
The other sparsity measure $\beta_L$ is closely related to the threshold-based sparsity measures 
discussed in Ref.~\cite{Hurley2009Gini}, 
which are generally less convenient to optimize and compare across different system sizes.
By contrast, the Gini coefficient provides a more robust and informative global measure of concentration~\cite{Hurley2009Gini}. 
In particular, because it summarizes the full probability distribution into a single scalar with a clear geometric interpretation, it is well suited to analyzing the scaling of SQD resource requirements and the improvement achieved by FSQD.

The Lorenz curve directly yields bounds on the resources required for SQD, namely the sampled-subspace dimension needed to achieve a target accuracy and the number of measurement shots required to obtain that subspace with high probability. 
To make this precise, let $E_{\mathrm{SQD}}$ denote the lowest eigenvalue of the truncated Hamiltonian constructed in the sampled subspace, and define the SQD deviation by
\begin{equation}
    \Delta_{\mathrm{SQD}}
    \coloneqq
    E_{\mathrm{SQD}}-\expval*{\hat H}{\psi}.
\end{equation}
The following theorem provides sufficient conditions to guarantee $\Delta_{\mathrm{SQD}}\le \varepsilon$. 
\begin{theorem}\label{theo:sqd}
    For a quantum state $\ket{\psi}$ and a Hamiltonian $\hat H$ of an $n$-qubit system, 
    a sufficient sampled-subspace dimension for guaranteeing 
    \begin{equation}
        E_{\mathrm{SQD}}-\expval*{\hat H}{\psi}\le \varepsilon
    \end{equation}
    is
    \begin{equation}\label{eq:N_R-bound}
        N_R = \left(1-\calL^{-1}(\tilde\varepsilon^2)\right)N,
    \end{equation}
    and the number of shots required to obtain these bitstrings with probability at least $1-\eta$ is
    \begin{equation}\label{eq:N_S-bound}
        N_S = \frac{\log[(1-\calL^{-1}(\tilde\varepsilon^2))N/\eta]}{\calL'(\calL^{-1}(\tilde\varepsilon^2))/N},\qquad  \eta>0.
    \end{equation}
\end{theorem}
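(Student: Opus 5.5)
Our plan is to prove the two claims separately: first bound the energy deviation through the discarded weight, then bound the shot count by a coupon-collector-type union bound. Throughout we take $\tilde\varepsilon$ to be a rescaled accuracy of the form $\tilde\varepsilon=\varepsilon/c_H$. Here $c_H$ is a Hamiltonian-dependent constant, e.g.\ proportional to $\Vert\hat H\Vert$ or to the spread of the spectrum; its exact form is fixed in the first step.

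\emph{Step 1 (subspace dimension).} Let $S$ be the span of the $N_R$ basis states carrying the largest weights $|c_J|^2$, $J=N-N_R+1,\dots,N$. Let $P_S$ be the projector onto $S$, and define the truncated state $\ket{\psi_S}=P_S\ket{\psi}/\Vert P_S\ket{\psi}\Vert$. By the variational principle within $S$, $E_{\rm SQD}\le\expval*{\hat H}{\psi_S}$. The discarded weight is $\delta^2\coloneqq 1-\Vert P_S\psi\Vert^2=\calL(1-N_R/N)$.

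\emph{Step 2 (energy deviation from discarded weight).} Write $\ket{\psi}=\sqrt{1-\delta^2}\ket{\psi_S}+\delta\ket{\psi_\perp}$. A direct expansion gives
\begin{equation}
\expval*{\hat H}{\psi_S}-\expval*{\hat H}{\psi}
=\delta^2\left(\expval*{\hat H}{\psi_S}-\expval*{\hat H}{\psi_\perp}\right)
-2\delta\sqrt{1-\delta^2}\,\mathrm{Re}\mel*{\psi_S}{\hat H}{\psi_\perp}.
\end{equation}
The cross term is only $O(\delta)$, which would force $\delta\sim\varepsilon$ and hence the argument $\tilde\varepsilon^2$ in $\calL^{-1}$ with $\tilde\varepsilon\propto\varepsilon/\Vert\hat H\Vert$.

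This is the main obstacle: deciding whether the claimed bound is linear or quadratic in $\delta$. The cross term vanishes when $\ket{\psi}$ is an eigenstate, which is the intended setting. In general, I would absorb it by defining $\tilde\varepsilon$ through $2\Vert\hat H\Vert\delta\le\varepsilon$, so that $\delta^2\le\tilde\varepsilon^2$. Either way, the requirement becomes $\calL(1-N_R/N)\le\tilde\varepsilon^2$. Since $\calL$ is convex, continuous and nondecreasing, it has a well-defined (generalized) inverse, and we may choose $N_R=(1-\calL^{-1}(\tilde\varepsilon^2))N$. This is Eq.~\eqref{eq:N_R-bound}.

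\emph{Step 3 (shot count).} Every basis state in $S$ has index $J\ge I_0\coloneqq N-N_R+1$, so its probability satisfies $p_J\ge|c_{I_0}|^2$. By the slope formula, $\calL'(x)=N|c_I|^2$ on the corresponding interval, and convexity gives $|c_{I_0}|^2\ge\calL'(\calL^{-1}(\tilde\varepsilon^2))/N\eqqcolon p_{\min}$, up to a one-sided-derivative convention. The probability that a fixed state in $S$ is never observed in $N_S$ independent shots is $(1-p_J)^{N_S}\le e^{-N_S p_{\min}}$. A union bound over the $N_R$ states then bounds the failure probability by $N_R e^{-N_Sp_{\min}}$. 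Setting this equal to $\eta$ gives $N_S=\log(N_R/\eta)/p_{\min}$, which is Eq.~\eqref{eq:N_S-bound}.

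Finally, we must check that selecting the $N_R$ \emph{most frequent} bitstrings does recover $S$, or at least a set containing it. It suffices that all of $S$ appears: one can then take the observed set, which contains $S$, and variational monotonicity (Appendix~\ref{appx:I}) shows that enlarging the subspace only lowers $E_{\rm SQD}$.
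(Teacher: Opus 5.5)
Your architecture matches the paper's proof. You compare $E_{\rm SQD}$ with the energy of the renormalized truncated state (your $\ket{\psi_S}$, the paper's $\ket{\psi_x}$) and bound that shift linearly in the discarded amplitude $\delta=\sqrt{\calL(x)}$. You then invert $\calL(x)\le\tilde\varepsilon^2$ to get $N_R$, and get $N_S$ from the union bound $N_R\,e^{-N_S\calL'(x)/N}\le\eta$, using that every retained weight is at least $\calL'(x)/N$. Step~3 is exactly the paper's argument.

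The gap is in Step~2, where the normalization of $\tilde\varepsilon$ is asserted rather than derived. Your condition $2\Vert\hat H\Vert\delta\le\varepsilon$ controls only the cross term. It leaves the term $\delta^2\bigl(\expval*{\hat H}{\psi_S}-\expval*{\hat H}{\psi_\perp}\bigr)$ unaccounted for. Bounding both terms of your expansion with $\rho(\hat H)$ then gives only $\Delta_{\rm SQD}\le\varepsilon\bigl(\delta+\sqrt{1-\delta^2}\bigr)$, a bound that exceeds $\varepsilon$ for every $0<\delta<1$. The one-line repair is $\delta^2+\delta\sqrt{1-\delta^2}\le\sqrt{2}\,\delta$. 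This gives $\Delta_{\rm SQD}\le 2\sqrt{2}\,\rho(\hat H)\,\delta$, hence $\tilde\varepsilon=\varepsilon/\bigl(2\sqrt{2}\rho(\hat H)\bigr)$, which is precisely the paper's definition.

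The paper reaches the same constant through a different identity. With $\ket{\psi'}=\ket{\psi_x}-\ket{\psi}$ it writes the shift as $\mel*{\psi'}{\hat H}{\psi_x}+\mel*{\psi}{\hat H}{\psi'}$. It bounds this by $2\rho(\hat H)\Vert\psi'\Vert_2=2\rho(\hat H)\bigl(2-2\sqrt{1-\calL(x)}\bigr)^{1/2}$, then weakens $\calL(x)\le 2\tilde\varepsilon^2-\tilde\varepsilon^4$ to $\calL(x)\le\tilde\varepsilon^2$. Your constant $2\Vert\hat H\Vert$ is in fact correct, but only via a sharper step than the one you wrote. For example, $\bigl|\tr\bigl[\hat H\bigl(\dyad{\psi_S}-\dyad{\psi}\bigr)\bigr]\bigr|\le\tfrac12(\lambda_{\max}-\lambda_{\min})\bigl\Vert\dyad{\psi_S}-\dyad{\psi}\bigr\Vert_1=(\lambda_{\max}-\lambda_{\min})\,\delta$.

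Your claim that the cross term vanishes when $\ket{\psi}$ is an eigenstate is false. From $\hat H\ket{\psi}=E\ket{\psi}$ one gets $\delta\,\mel*{\psi_S}{\hat H}{\psi_\perp}=\sqrt{1-\delta^2}\,\bigl(E-\expval*{\hat H}{\psi_S}\bigr)$, which is generically nonzero. What actually happens is that the $O(\delta)$ pieces cancel, leaving $\expval*{\hat H}{\psi_S}-E=\frac{\delta^2}{1-\delta^2}\bigl(\expval*{\hat H}{\psi_\perp}-E\bigr)$. Nor is the eigenstate case the intended setting. The paper stresses that $\ket{\psi}$ in Theorem~\ref{theo:sqd} is an arbitrary sampler state, which is why the bound is linear in $\delta$ and $\tilde\varepsilon^2$ sits inside $\calL^{-1}$.

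You also note that SQD keeps the $N_R$ most frequent strings, while the union bound only guarantees that all of $S$ is observed. The paper assumes this away by taking the sampled subspace to be the $(1-x)N$ most prominent bitstrings. Your fix via the monotonicity of Appendix~\ref{appx:I} is sound. The cost is that the diagonalized subspace becomes the set of all distinct observed strings rather than exactly $N_R$ of them.
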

Here $\calL^{-1}$ denotes the inverse function of the Lorenz curve. We have defined $\tilde{\varepsilon} \coloneqq \varepsilon/{\left(2\sqrt{2}\rho(\hat H)\right)}$, where $\rho(\hat H)$ denotes the spectral norm of $\hat H$, i.e., the largest singular value of the matrix $H$. The proof of Theorem~\ref{theo:sqd} is given in Appendix~\ref{appx:II}.

We emphasize that $\ket{\psi}$ in Theorem~\ref{theo:sqd} need not be the exact ground state $\ket{\psi_g}$ of $\hat H$; it may be any normalized sampler state.
In the special case $\ket{\psi}=\ket{\psi_g}$, one has $\expval*{\hat H}{\psi_g}=E_0$, so that the theorem directly provides a bound on the resources required for SQD to approximate the true ground-state energy.
Moreover, in this case $E_{\mathrm{SQD}}-\expval*{\hat H}{\psi_g}\ge 0$ by the variational principle, and the deviation $\Delta_{\mathrm{SQD}}$ reduces to the usual energy-estimation error.

Although Theorem~\ref{theo:sqd} is expressed directly in terms of the Lorenz curve, obtaining an analytical expression for $\mathcal{L}(x)$ is generally difficult.
To obtain simpler bounds that are easier to analyze, we instead make use of the Gini coefficient $G$.
Thanks to its graphical interpretation, $G$ yields lower bounds on both $N_R$ and $N_S$.
Moreover, unlike the full Lorenz curve, $G$ is a single scalar quantity and is therefore particularly convenient for analyzing finite-size scaling. 
To this end, we approximate the Lorenz curve by the piecewise-linear function $\mathcal{L}_a(x)$ defined by
\begin{equation}
\mathcal{L}_a(x)=
\begin{cases}
0, & x<G,\\[4pt]
\dfrac{x-G}{1-G}, & x\geqslant G,
\end{cases}
\end{equation}
as illustrated in Fig.~\ref{fig:lorenz-gini}. 
The following corollary then gives simple lower bounds on the SQD resource requirements. 
\begin{corollary}\label{coro:sqd}
    For $\tilde{\varepsilon}^2$ smaller than the value at the intersection point of the true Lorenz curve $\calL(x)$ and its approximation $\calL_a(x)$, the sampled-subspace dimension and the number of shots satisfy 
    \begin{equation}
    N_R \geqslant (1-G)N(1-\tilde{\varepsilon}^2),
    \end{equation}
    and
    \begin{equation}
    N_S 
    \geqslant (1-G)N(1-\tilde{\varepsilon}^2)\log\!\left[\frac{(1-G)N(1-\tilde{\varepsilon}^2)}{\eta}\right], 
    \label{eq:Ns_bound}
    \end{equation}
    respectively.
%
\end{corollary}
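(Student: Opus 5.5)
The plan is to substitute the piecewise-linear proxy $\calL_a$ for the true Lorenz curve $\calL$ in the two expressions of Theorem~\ref{theo:sqd}, and then use convexity of $\calL$ to show that the substitution can only decrease $N_R$ and $N_S$. This turns the theorem's sufficient values into explicit lower bounds in terms of $G$.

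\emph{Geometry of $\calL$ versus $\calL_a$.} Both curves are convex and satisfy $\calL(1)=\calL_a(1)=1$. The true curve is convex with $\calL(0)=0$, while $\calL_a$ vanishes on $[0,G]$ and is linear on $[G,1]$. On the region $y\le y_\ast$, where $y_\ast$ is the height of the intersection point named in the hypothesis, I will argue that $\calL$ lies on or above $\calL_a$ in the relevant range. Equivalently, for $y=\tilde\varepsilon^2\le y_\ast$ the inverse functions are ordered as $\calL^{-1}(y)\le \calL_a^{-1}(y)$.

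This ordering is the delicate step. In general $\calL$ and $\calL_a$ cross, because their areas below the diagonal agree up to the discretization defining $G$. The hypothesis that $\tilde\varepsilon^2$ lies below the crossing is what fixes the orientation. I will check it by comparing slopes near the crossing: $\calL_a$ has slope $1/(1-G)$ on $[G,1]$, and convexity of $\calL$ forces it to lie below this chord-like line to the right of the crossing and above it to the left. Since $\calL_a^{-1}(y)=G+(1-G)y$, the ordering gives
\[
N_R=(1-\calL^{-1}(\tilde\varepsilon^2))N\ \ge\ (1-G-(1-G)\tilde\varepsilon^2)N=(1-G)N(1-\tilde\varepsilon^2).
\]

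\emph{Shot count.} I will bound the slope in the denominator of Eq.~\eqref{eq:N_S-bound}. Convexity of $\calL$, together with $\calL(1)=1$, bounds the slope at $x_0=\calL^{-1}(\tilde\varepsilon^2)$ by the secant to the endpoint:
\[
\calL'(x_0)\le \frac{1-\tilde\varepsilon^2}{1-x_0}.
\]
Thus
\[
\frac{N}{\calL'(x_0)}\ \ge\ \frac{(1-x_0)N}{1-\tilde\varepsilon^2}\ \ge\ (1-x_0)N\ =\ N_R.
\]
Combining this with monotonicity of the logarithm and the bound on $N_R$ just obtained gives
\[
N_S\ \ge\ N_R\log(N_R/\eta)\ \ge\ (1-G)N(1-\tilde\varepsilon^2)\log\!\left[\frac{(1-G)N(1-\tilde\varepsilon^2)}{\eta}\right],
\]
which is Eq.~\eqref{eq:Ns_bound}. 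If the secant step turns out to be too loose, the fallback is to compare slopes with $\calL_a'=1/(1-G)$ directly in the intersection regime, which yields the same expression. I also need $N_R/\eta\ge 1$ so that the logarithm is nonnegative; this holds because $\eta<1$.

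The main obstacle is the inverse-function ordering. Its correctness depends on exactly which side of the crossing $\tilde\varepsilon^2$ lies. I will address it first by a careful convexity argument comparing $\calL$ with the line through $(G,0)$ and $(1,1)$, splitting into the cases $x<G$ and $x\ge G$.
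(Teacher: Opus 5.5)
Your proposal is correct. The $N_R$ part follows the paper's argument. The paper simply reads $\calL^{-1}(\tilde\varepsilon^2)\le\calL_a^{-1}(\tilde\varepsilon^2)$ off Fig.~\ref{fig:lorenz-gini}, whereas your convexity argument actually proves it. To make that step airtight, note that the line $\ell(x)=(x-G)/(1-G)$ passes through $(1,\calL(1))$ and through the crossing point $(x_\ast,\calL(x_\ast))$. It is therefore literally the chord of $\calL$ over $[x_\ast,1]$, so convexity places $\calL$ above it on $[G,x_\ast]$, while $\calL\ge 0=\calL_a$ on $[0,G]$.

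For $N_S$ your route genuinely differs. The paper first moves the slope to the proxy point, $\calL'(\calL^{-1}(\tilde\varepsilon^2))\le\calL'(\calL_a^{-1}(\tilde\varepsilon^2))$, which uses monotonicity of $\calL'$ and the ordering lemma a second time. It then applies the tangent-below-$(1,1)$ bound at $\calL_a^{-1}(\tilde\varepsilon^2)$, discarding the value of $\calL$ there. You instead apply the same secant bound at the true point $x_0=\calL^{-1}(\tilde\varepsilon^2)$ and keep $\calL(x_0)=\tilde\varepsilon^2$. This gives $N_S\ge N_R\log(N_R/\eta)/(1-\tilde\varepsilon^2)\ge N_R\log(N_R/\eta)$ for the theorem's $N_R$ and $N_S$ at every $\tilde\varepsilon$, with no reference to $\calL_a$. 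The intersection hypothesis then enters only through the $N_R$ bound.

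What each route buys: yours isolates the coupon-collector structure $N_S\gtrsim N_R\log N_R$ that the paper discusses after the corollary, and it is marginally sharper. The paper's version keeps the two bounds as parallel substitutions of $\calL_a$ for $\calL$. Your explicit check that $\log(N_R/\eta)\ge 0$, which is needed to replace $N_R$ by a smaller quantity inside $N_R\log(N_R/\eta)$, is a detail the paper leaves implicit.
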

\begin{proof}
    From the graphical construction in Fig.~\ref{fig:lorenz-gini}, one has $\calL^{-1}(\tilde{\varepsilon}^2)\leqslant \calL_a^{-1}(\tilde{\varepsilon}^2)$. 
    Substituting this into Eq.~\eqref{eq:N_R-bound}, we obtain 
    \begin{equation}
    N_R = \left(1-\calL^{-1}(\tilde{\varepsilon}^2)\right)N \geqslant \left(1-\calL_a^{-1}(\tilde{\varepsilon}^2)\right)N .
    \end{equation}
    Since
    \begin{equation}
    \mathcal{L}_a^{-1}(\tilde{\varepsilon}^2)=G+(1-G)\tilde{\varepsilon}^2,
    \end{equation}
    it follows that
    \begin{equation}
    1-\mathcal{L}_a^{-1}(\tilde{\varepsilon}^2)
    =(1-G)(1-\tilde{\varepsilon}^2),
    \end{equation}
    and hence
    \begin{equation}
    N_R \geqslant (1-G)N(1-\tilde{\varepsilon}^2).
    \end{equation}
    
    Next, because $\mathcal{L}(x)$ is convex, its derivative is non-decreasing, and therefore
    \begin{equation}
    \mathcal{L}'\!\left(\mathcal{L}^{-1}(\tilde{\varepsilon}^2)\right)
    \leqslant
    \mathcal{L}'\!\left(\mathcal{L}_a^{-1}(\tilde{\varepsilon}^2)\right).
    \end{equation}
    To bound the latter quantity, let
    \begin{equation}
    x_0\coloneqq \mathcal{L}_a^{-1}(\tilde{\varepsilon}^2).
    \end{equation}
    Since $\mathcal{L}(x)$ is convex and satisfies $\mathcal{L}(1)=1$, the tangent line at $x=x_0$ lies below the point $(1,1)$, implying
    \begin{equation}
    \mathcal{L}'(x_0)
    \leqslant
    \frac{1-\mathcal{L}(x_0)}{1-x_0}
    \leqslant
    \frac{1}{1-x_0}.
    \end{equation}
    Using
    \begin{equation}
    1-x_0 = 1-\mathcal{L}_a^{-1}(\tilde{\varepsilon}^2)
    =(1-G)(1-\tilde{\varepsilon}^2),
    \end{equation}
    we obtain
    \begin{equation}
    \mathcal{L}'\!\left(\mathcal{L}_a^{-1}(\tilde{\varepsilon}^2)\right)
    \leqslant
    \frac{1}{(1-G)(1-\tilde{\varepsilon}^2)}.
    \end{equation}
    Combining this with Eq.~\eqref{eq:N_S-bound}, we find
    \begin{equation}
    N_S
    \geqslant
    (1-G)N(1-\tilde{\varepsilon}^2)
    \log\!\left[\frac{(1-G)N(1-\tilde{\varepsilon}^2)}{\eta}\right].
    \end{equation}
    %
\end{proof}

We emphasize that the two bounds in Corollary~\ref{coro:sqd} are lower bounds on the SQD resource requirements and therefore do not by themselves guarantee the target accuracy $\varepsilon$, in contrast to Eqs.~\eqref{eq:N_R-bound} and \eqref{eq:N_S-bound}.
Nevertheless, they are highly useful for understanding the scaling of SQD overhead with system size.
For example, if the $N_R$ relevant bitstrings were sampled uniformly, the expected number of shots required to observe each of them at least once would be $N_RK_{N_R}$, where $K_{N_R}=\sum_{k=1}^{N_R}\frac{1}{k}$ is the $N_R$th harmonic number and satisfies $K_{N_R}\sim \log N_R$ for large $N_R$. 
This follows from the coupon collector problem~\cite{ferrante2012notecoupon}: once $k-1$ distinct bitstrings have been observed, the expected number of additional shots needed to obtain a new one is $N_R/(N_R-k+1)$, and summing over $k=1,\dots,N_R$ yields $N_RK_{N_R}$.  
This has the same leading-order scaling as the lower bound on $N_S$ above in Eq.~\eqref{eq:Ns_bound}. 
This observation suggests that an ideal sampler should distribute probability approximately uniformly over the $N_R$ basis states relevant for subspace construction.
At the same time, such an optimal uniform sampler contains no information about the relative phases and amplitudes of the target ground state.
Therefore, efficient preparation of an optimal sampler does not by itself solve the ground-state preparation problem, and the latter is not reducible to the former.



When the wavefunction amplitudes $|c_I|^2$ follow exponential or power-law decay, the corresponding Lorenz curve can be derived analytically in terms of a rate parameter $\lambda$, which characterizes the concentration of the measurement distribution and, correspondingly, the Gini coefficient (see Appendix~\ref{appx:II}).
If $\lambda$ is independent of the system size $n$, as assumed in Ref.~\cite{robledo2024chemistry_pub}, then the Gini coefficient (as well as the Lorenz curve itself) satisfies $(1-G)N\propto\mathrm{poly}(n)$, implying that the SQD resource requirements, namely $N_S$ and $N_R$, scale only polynomially with system size.
In practice, however, one often finds the scaling
\begin{equation}
(1-G)N \propto 2^{gn+c},
\end{equation}
where $g$ characterizes the effective system-size exponent and $c$ is a size-independent constant.
Equivalently, this may be written as
\begin{equation}
1-G \propto 2^{-(1-g)n+c}.
\end{equation}
When $g>0$, the quantity $(1-G)N$ grows exponentially with system size, leading to an exponentially large sampling overhead.

%
Moreover, reducing the sufficient subspace dimension by a factor $k$ generally comes at the cost of reduced energy-estimation accuracy. 
For example, setting $N_R\mapsto N_R/k$ suggests the transformations $\varepsilon\mapsto\varepsilon^{1/k}$ for exponential decay, and $\varepsilon\mapsto\varepsilon \sqrt{k}$ for second-order power-law decay in the regime $\tilde{\varepsilon}\ll 1$. 
These relations follow by inverting the asymptotic dependence of the sufficient subspace dimension $N_R$ on $\tilde\varepsilon$: for exponential decay, $N_R\propto \log(1/\tilde\varepsilon^2)$, whereas for second-order power-law decay, $N_R\propto \tilde\varepsilon^{-2}$ in the regime $\tilde\varepsilon\ll 1$ (see Appendix~\ref{appx:II}). 
This illustrates that, within standard SQD, lowering the sampling cost by simply reducing the sampled-subspace dimension necessarily worsens the attainable accuracy. 
%

The role of FSQD is to change this scaling behavior at the level of the measurement distribution itself.
By applying a quantum filter, FSQD aims to transform the original ground-state wavefunction into a filtered one whose probability distribution is substantially more sparse in the computational basis. 
Correspondingly, the measurement distribution of the filtered sampler is expected to become more sparse than that of the original sampler that approximates the target ground state. 
Ideally, this transformation produces a system-size-independent Gini coefficient, or at least weakens its system-size dependence, corresponding to a smaller value of $g$ in the scaling relation above.
In this way, FSQD can reduce the sampling overhead not merely by shrinking the sampled subspace at fixed distribution, but by reshaping the distribution itself so that the basis states relevant for subspace construction are sampled much more efficiently.


\section{Quantum circuit filter encoding}\label{sec:automatic}

\begin{figure}[tb]
    \includegraphics[width=\linewidth]{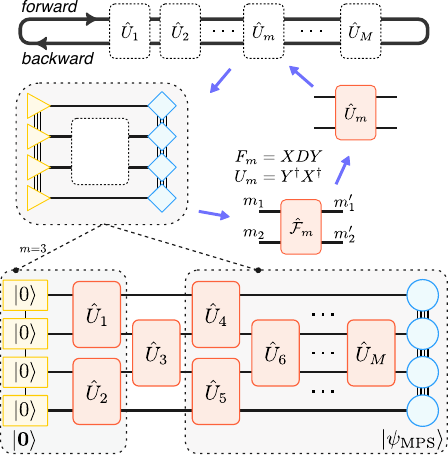}
    \caption{    
      Schematic illustration of the forward and backward sweeps in the MPS-circuit-encoding algorithm. The top panel shows the bidirectional optimization over the local two-qubit unitary matrices $U_m$. 
      The middle-left panel illustrates the local tensor-network environment for optimizing the $m$th unitary. 
      The middle-right panel shows the fidelity tensor $\hat{\mathcal F}_m$ and its matrix representation $F_m$, from which the locally optimal unitary matrix $U_m=Y^\dagger X^\dagger$ is obtained through the singular-value decomposition $F_m=XDY$. 
      The lower panel shows the circuit $\mathcal{C}_{\rm MPS}$, composed of local two-qubit unitaries $\hat U_m$, which is optimized so that the all-zero product state $\ket{\bm 0}$ is mapped approximately to the target MPS $\ket{\psi_{\rm MPS}}$. 
      The gray-shaded regions surrounded by dashed lines represent the contracted states $\ket{\Psi_m^{L}}$ and $\bra{\Psi_m^{R}}$.
    }
    \label{fig:tn-alg}
\end{figure}

\begin{figure}[tb]
\begin{algorithm}[H]
\caption{MPS circuit encoding~\cite{Shirakawa2025}}
\label{alg:two}
\begin{algorithmic}[1]
    \Require Target MPS $\ket{\psi_{\rm MPS}}$, initial MPS $\ket{\bm 0}$, and circuit structure of $\mathcal{C}_{\rm MPS}$
    \Ensure Local two-qubit unitaries $\hat U_1,\dots,\hat U_m,\dots,\hat U_M$
    \Function{\scshape Bidirectional}{\emph{direction}}
        \If{\emph{direction} \textbf{is} \emph{forward}}
            \State $a,b,s \gets 1,M,+1$
        \ElsIf{\emph{direction} \textbf{is} \emph{backward}}
            \State $a,b,s \gets M,1,-1$
        \EndIf
        \For{$m=a$ \textbf{to} $b$ \textbf{with step} $s$}
            \State $\hat{\mathcal F}_m \gets \tr_{\bar m}(\ket{\Psi_m^{L}}\bra{\Psi_m^{R}})$ \Comment{Eq.~\eqref{eq:fidelity_tensor}}
            \State $F_m \gets$ matrix representation of $\hat{\mathcal F}_m$
            \State $X,D,Y \gets \mathrm{svd}(F_m)$ \Comment{perform SVD}
            \State $U_m \gets Y^{\dagger}X^{\dagger}$ \Comment{Eq.~\eqref{eq:optimal_u}}
            \State $\hat U_m \gets$ operator/tensor representation of $U_m$
            \If{\emph{direction} \textbf{is} \emph{forward}}
                \State $\ket{\Psi_{m+1}^{L}} \gets \hat U_m \ket{\Psi_m^{L}}$ if $m\neq b$
            \ElsIf{\emph{direction} \textbf{is} \emph{backward}}
                \State $\bra{\Psi_{m-1}^{R}} \gets \bra{\Psi_m^{R}} \hat U_m$ if $m\neq b$
            \EndIf
        \EndFor
    \EndFunction
    \State $\hat U_m \gets \hat I$ for $m=1,\dots,M$
    \State $\ket{\Psi_1^{L}} \gets \ket{\bm 0}$
    \State $\bra{\Psi_M^{R}} \gets \bra{\psi_{\rm MPS}}$
    \For{$i=1$ \textbf{to} $N_{\rm iter}$}
        \State \textsc{Bidirectional}(\emph{forward})
        \State \textsc{Bidirectional}(\emph{backward})
    \EndFor
\end{algorithmic}
\end{algorithm}
\end{figure}

To construct a quantum circuit filter, 
we first approximate the ground state of $\hat H$ by the density matrix renormalization group (DMRG) method 
and obtain its matrix product state (MPS) representation, denoted by $\ket{\psi_{\rm MPS}}$. 
We then use the automatic MPS circuit encoding method introduced in Ref.~\cite{Shirakawa2025} to construct a quantum circuit approximating this target MPS.
In the present work, this encoded circuit provides the filter circuit used in FSQD.
More specifically, the goal is to optimize a circuit $\mathcal{C}_{\rm MPS}$ whose associated unitary $\hat U_{\rm MPS}$ satisfies 
\begin{equation}
    \hat U_{\rm MPS}\ket{\bm 0}\approx \ket{\psi_{\rm MPS}},
\end{equation}
with controllable fidelity. 
Related approaches to encoding tensor-network states into quantum circuits have recently attracted considerable interest~\cite{Ran2020Encoding,Malz2024Preparation,Smith2024Constant}.

Assume that the circuit $\mathcal{C}_{\rm MPS}$ is composed of a sequence of local two-qubit unitaries $\hat U_m$, each acting on qubits $m_1$ and $m_2$.
The total unitary of the circuit is
\begin{equation}
\hat U_{\rm MPS} = \hat U_M \cdots \hat U_{m+1}\hat U_m \cdots \hat U_1.
\end{equation}
Each local unitary $\hat U_m$ may be viewed either as a $4\times 4$ matrix $U_m$ or, equivalently, as a four-leg tensor with indices $m_1',m_2',m_1,m_2\in\{0,1\}$.
The full circuit $\mathcal{C}_{\rm MPS}$ is therefore represented in tensor-network form by applying these local tensors to the MPS of the all-zero state $|\bm 0\rangle$, as illustrated in Fig.~\ref{fig:tn-alg}. 
Here, we assume that these two-qubit unitaries are arranged in a brick-wall form, and therefore the circuit structure itself is not optimized. 
The number of such layers plays a role analogous to the bond dimension in DMRG and controls the fidelity of the encoded circuit.

The pseudocode of the encoding procedure is given in Algorithm~\ref{alg:two}.
Its overall structure is as follows.
First, we initialize the target MPS $\ket{\psi_{\rm MPS}}$, the initial MPS $\ket{\bm 0}$, and the circuit structure of $\mathcal{C}_{\rm MPS}$.
Second, we define two sets of MPS,
\(\{\ket{\Psi_m^{L}}\}\) and \(\{\bra{\Psi_m^{R}}\}\), by
\begin{eqnarray}
\ket{\Psi_m^{L}} &=& \hat U_{m-1}\cdots \hat U_1\ket{\bm 0},\\
\bra{\Psi_m^{R}} &=& \bra{\psi_{\rm MPS}}\hat U_M\cdots \hat U_{m+1}.
\end{eqnarray}
That is, the local unitaries with indices \(i<m\) and \(i>m\) are contracted with $\ket{\bm 0}$ and $\ket{\psi_{\rm MPS}}$, respectively.
An example with $m=3$ is shown in Fig.~\ref{fig:tn-alg}, where the contracted states $\ket{\Psi_3^{L}}$ and $\bra{\Psi_3^{R}}$ appear as the gray-shaded regions surrounded by dashed lines.

Following Ref.~\cite{Shirakawa2025}, we define the fidelity tensor
\begin{equation}\label{eq:fidelity_tensor}
\hat{\mathcal{F}}_m = \tr_{\bar m}(\ket{\Psi_m^{L}}\bra{\Psi_m^{R}}),
\end{equation}
where $\tr_{\bar m}$ denotes contraction over all MPS site indices except those associated with qubits $m_1$ and $m_2$.
The resulting object has four indices, corresponding to a local two-qubit tensor.
The objective function to be optimized is
\begin{equation}\label{eq:mps-objective-func}
f_m = \tr_m(\hat{\mathcal{F}}_m \hat U_m)=\tr(F_m U_m),
\end{equation}
where $F_m$ and $U_m$ denote the $4\times 4$ matrix representations of $\hat{\mathcal{F}}_m$ and the local two-qubit unitary $\hat U_m$, respectively.
Note that $F_m$ is generally neither Hermitian nor unitary~\cite{Shirakawa2025}.

Performing the singular-value decomposition
\begin{equation}\label{eq:mps-Fm}
F_m = X D Y,
\end{equation}
with $X$ and $Y$ unitary and $D$ diagonal with singular values $d_i$ \((i=1,2,3,4)\), we obtain
\begin{equation}
f_m = \tr(XDYU_m)=\tr(DZ)=\sum_{i=1}^{4} d_i[Z]_{ii},
\end{equation}
where
\begin{equation}
Z = YU_mX 
\end{equation}
is unitary.
For a unitary matrix, one has
\begin{equation}
\sum_{i'} |[Z]_{ii'}|^2 = 1,
\end{equation}
and hence \(|[Z]_{ii}|\le 1\).
Therefore,
\begin{equation}
|f_m|
=
\left|\sum_{i=1}^{4} d_i[Z]_{ii}\right|
\le
\sum_{i=1}^{4} d_i|[Z]_{ii}|
\le
\sum_{i=1}^{4} d_i,
\end{equation}
where equality is attained when \(Z=I\) up to an irrelevant global phase.
Thus, the locally optimal two-qubit unitary matrix is
\begin{equation}\label{eq:optimal_u} 
U_m = Y^\dagger X^\dagger,
\end{equation}
which maximizes \(|f_m|\).

Each iteration of Algorithm~\ref{alg:two} consists of a forward sweep and a backward sweep. 
In the forward sweep, the locally optimal unitaries \(U_m\) are updated sequentially from \(m=1\) to \(m=M\), while the states \(\{\ket{\Psi_m^{L}}\}\) are updated accordingly.
In the backward sweep, the procedure is reversed from \(m=M\) down to \(m=1\), updating the states \(\{\bra{\Psi_m^{R}}\}\).
The iteration is repeated either for a fixed number of steps or until a prescribed convergence criterion for \(f_m\) is reached. 
The optimized unitary matrices $\{U_m\}$ are subsequently decomposed into an elementary gate set consisting of single- and two-qubit gates (see Appendix~\ref{appx:IV}).

\section{Numerical benchmark for the quantum Ising model}\label{sec:ising-model}

\begin{figure*}[tbh]
    \includegraphics[width=\linewidth]{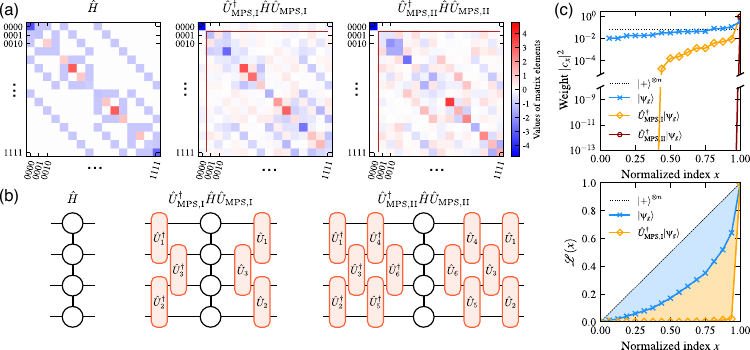}
    \caption{
      (a) Matrix representations of the original Hamiltonian $\hat H$ [Eq.~\eqref{eq:ising-model}] and the filtered Hamiltonian $\hat H'=\hat U_{\rm MPS,I(II)}^{\dagger}\hat H\hat U_{\rm MPS,I(II)}$ for one-layer and two-layer filters on a four-site system. 
      (b) Corresponding tensor-network, i.e., MPO, representations. 
      (c) Sorted weights $|c_I|^2$ and cumulative weights $\calL(x)$ in the computational basis for the same four-site system. 
      The Gini coefficient equals twice the shaded area between the  Lorenz curve and the diagonal line corresponding to the equal-superposition state $|+\rangle^{\otimes n}$. 
      Examples are shown for $\ket{\psi_g}$ (blue), $\umpsdI\ket{\psi_g}$ (yellow), and $\umpsdII\ket{\psi_g}$ (brown). 
      For the two-layer filter, the weight distribution is so strongly concentrated on $\ket{\bm 0}$ that its Lorenz curve is almost indistinguishable from a delta-function-like contribution at $x=1$ on this scale (not shown). 
    }
    \label{fig:sqd-filter}
\end{figure*}

\begin{figure*}[tb]
    \includegraphics[width=\linewidth]{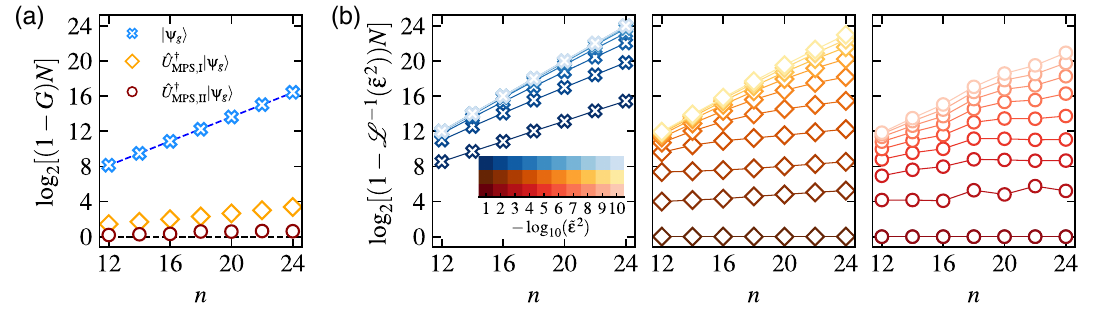}
    \caption{
      (a) Deviation of the Gini coefficient from unity, $1-G$, as a measure of wavefunction sparsity, plotted as a function of system size $n$. 
      A fit to the ground state $\ket{\psi_g}$ of the original Hamiltonian $\hat H$ gives $(1-G)N \sim 2^{0.7n}$, shown by the blue dashed line, whereas the filtered states generated by $\hat U_{\rm MPS,I}$ and $\hat U_{\rm MPS,II}$ approach the sparsest-case scaling, shown by the black dashed line. 
      (b) Inverse Lorenz curve as functions of $n$ for $|\psi_g\rangle$ (left), $\hat U_{\rm MPS,I}^\dagger|\psi_g\rangle$ (middle), and $\hat U_{\rm MPS,II}^\dagger|\psi_g\rangle$ (right). 
      The color intensity corresponds to the target estimation error $\tilde{\varepsilon}^2=10^{-1}$, $10^{-2}$, $\dots$, and $10^{-10}$, as indicated in the inset of the left panel. 
      }
    \label{fig:sqd-sparsity}
\end{figure*}

To benchmark the proposed FSQD protocol, we consider the quantum Ising chain in both transverse and longitudinal fields, described by the Hamiltonian
\begin{equation}\label{eq:ising-model}
    \hat{H} = \hat{H}_0 + \hat{V},
\end{equation}
where
\begin{equation}
    \hat{H}_0 = -J\sum_{i}\hat\sigma_{i}^{z}\hat\sigma_{i+1}^{z} - h^x\sum_{i}\hat\sigma_{i}^{x},
\end{equation}
and
\begin{equation}
    \hat{V} = -h^z\sum_{i}\hat\sigma_{i}^{z}.
\end{equation}
Here, $\hat{\sigma}_x$ and $\hat{\sigma}_z$ are Pauli operators acting on site $i$. 
The term $\hat{H}_0$ is the transverse-field Ising model with ferromagnetic nearest-neighbor coupling $J>0$~\cite{PFEUTY197079}. 
The transverse field favors a paramagnetic state, while the Ising interaction favors ferromagnetic order. 
The model $\hat{H}_0$ is exactly solvable via a Jordan-Wigner transformation, which maps the spin system into noninteracting fermions~\cite{LIEB1961407}. 
Its ground state exhibits a quantum phase transition between ferromagnetic and paramagnetic phases at the critical point $h_c^x=J$~\cite{sachdev1999quantum}.
Near this critical point, the longitudinal field $\hat{V}$ breaks integrability and gives rise to nontrivial spin-excitation dynamics of considerable theoretical and experimental interest~\cite{Coldea2010}.

Throughout this section, we consider an ideal numerical benchmark of FSQD. 
In particular, we assume that the sampler state $|\tilde \psi_g\rangle$ is given by the ground state $|\psi_g\rangle$ of the Hamiltonian $\hat H$, obtained exactly by direct diagonalization for systems with $n\leqslant24$ and to essentially exact accuracy by DMRG for larger systems. 
No hardware noise is included in the these numerical simulations. 
This setting allows us to isolate the intrinsic effect of filter-assisted subspace construction from imperfections associated with actual quantum devices.

We compare the following sampling strategies.
First, as a reference, we consider the standard SQD protocol, in which bitstrings are sampled directly from the ground state $\ket{\psi_g}$ of the original Hamiltonian $\hat H$. 
Second, we consider FSQD using two different MPS-circuit filters, denoted by $\hat U_{\rm MPS,I}$ and $\hat U_{\rm MPS,II}$, corresponding to one and two layers of local two-qubit unitaries, respectively [see Fig.~\ref{fig:sqd-filter}(b)]. 
The corresponding filtered states are
$\hat U_{\rm MPS,I}^{\dagger}\ket{\psi_g}$ and
$\hat U_{\rm MPS,II}^{\dagger}\ket{\psi_g}$.
Since the filter is designed to concentrate the dominant component of the ground state onto the computational basis state $\ket{\bm 0}$, we use the projected filtered sampler states
$\hat P_{\ket{\bar{\bm0}}}\hat U_{\rm MPS,I}^{\dagger}\ket{\psi_g}$
and
$\hat P_{\ket{\bar{\bm0}}}\hat U_{\rm MPS,II}^{\dagger}\ket{\psi_g}$
for the FSQD energy-estimation benchmark. 
Here, $\hat P_{\ket{\bar{\bm0}}}$ denotes the projection onto the subspace orthogonal to $\ket{\bm0}$, as introduced in Sec.~\ref{sec:fsqd}. 
In the numerical simulations, this projector is implemented exactly and therefore introduces no additional approximation, such as the unitary approximation $\hat U_{{\ket{\bar{\bm0}}}}$ used in Eq.~(\ref{eq:U0}). 

In the simulations, we set $J=1$, $h^x=1$, and $h^z=0.05$, i.e., in a regime close to quantum criticality, and impose open boundary conditions. 
All DMRG and tensor-network computations are performed using the \textsc{Julia}~\cite{Julia2017} package \emph{ITensors.jl}~\cite{ITensor}. 
The ground state $\ket{\psi_g}$ is obtained by high-accuracy DMRG. For system sizes up to $n=100$, a maximum bond dimension of $20$ is sufficient to converge the ground-state energy and wavefunction to the accuracy required in the present benchmark. 
The MPS-circuit filters $\hat U_{\rm MPS,I}$ and $\hat U_{\rm MPS,II}$ are constructed from this ground-state MPS using the circuit-encoding algorithm described in Sec.~\ref{sec:automatic}. 
For tensor-network calculations involving the filtered Hamiltonian
\begin{equation}
\hat H' = \hat U_{\rm MPS,\alpha}^{\dagger}\hat H \hat U_{\rm MPS,\alpha},  
\end{equation}
with $\alpha=\mathrm{I},\mathrm{II}$, we use its MPO representation and set the maximum bond dimension to $50$. 

Figure~\ref{fig:sqd-filter}(a) compares the matrix representations of the original Hamiltonian $\hat H$ and the filtered Hamiltonian $\hat H'$ for a four-site system. 
In the filtered Hamiltonian, the dominant matrix element is shifted to the upper-left corner, corresponding to the computational basis state $\ket{0000}$. 
Moreover, the off-diagonal elements in the row and column associated with $\ket{0000}$ are already suppressed by one-layer filter $\hat U_{\rm MPS,I}$ and are reduced even further by the two-layer filter $\hat U_{\rm MPS,II}$. 
This behavior reflects the fact that increasing the number of filter layers improves the circuit approximation to the ground state, thereby making the filtered ground state more strongly concentrated on $\ket{0000}$. 
At the same time, away from the dominant $\ket{0000}$ sector, the filtered Hamiltonian becomes progressively denser than the original sparse Hamiltonian. 
The corresponding MPO representations are illustrated in Fig.~\ref{fig:sqd-filter}(b).

Figure~\ref{fig:sqd-filter}(c) shows the weight distributions of the original and filtered ground states in the computational basis. 
Hereafter, we denote by $\calC_{\rm MPS,I}$ and $\calC_{\rm MPS,II}$ the MPS-circuit filters composed of one and two layers of local two-qubit unitaries arranged in a brick-wall form, respectively, and by $\hat U_{\rm MPS,I}$ and $\hat U_{\rm MPS,II}$ the corresponding unitary operators [see Fig.~\ref{fig:sqd-filter}(b)]. 
Although the small system size obscures the asymptotic behavior of the weight distribution in the original ground state $\ket{\psi_g}$, the filtered states 
$\hat U_{\rm MPS,I}^{\dagger}\ket{\psi_g}$ and
$\hat U_{\rm MPS,II}^{\dagger}\ket{\psi_g}$
are strongly concentrated on $\ket{0000}$, with the weights of other computational-basis states suppressed by several orders of magnitude. 
Moreover, as expected, the two-layer filter produces a larger weight on $\ket{0000}$ than the one-layer filter, 
indicating that $\calC_{\rm MPS,II}$ provides a more accurate circuit encoding of $\ket{\psi_g}$. 
A more detailed characterization of the dependence on the number of layers is given in Appendix~\ref{appx:III}.


As discussed in Sec.~\ref{sec:sparsity}, the Gini coefficient of a quantum state, especially its scaling with system size, provides direct insight into the resource requirements of SQD.
We therefore compute the Gini coefficients $G$ of $\ket{\psi_g}$, $\umpsdI\ket{\psi_g}$, and $\umpsdII\ket{\psi_g}$ for systems with $n\leqslant24$, where the full state vectors can be stored explicitly in memory. 
Figure~\ref{fig:sqd-sparsity}(a) plots the deviation $1-G$ as a function of system size $n$, together with fits of the form 
\begin{equation}
    \log_2[(1-G)N] = gn + c. 
\end{equation}
For the ground state $\ket{\psi_g}$ of the original Hamiltonian, we obtain $g\simeq 0.7$, corresponding to the scaling $(1-G)N\sim 2^{0.7 n}$. 
According to Corollary~\ref{coro:sqd}, this implies that both the required subspace dimension $N_R$ and the number of shots $N_S$ in standard SQD grow at least exponentially with system size. 
By contrast, $\log_2[(1-G)N]$ for $\umpsdI\ket{\psi_g}$ exhibits a much weaker system-size dependence, while $\umpsdII\ket{\psi_g}$ closely follows the sparsest-case behavior, $G=1-2^{-n}$. 
This demonstrates that the circuit filter substantially enhances the sparsity of the measurement distribution and therefore strongly reduces the lower bounds on the SQD resources. 


We emphasize, however, that the lower bounds derived from Corollary~\ref{coro:sqd} do not by themselves guarantee that a desired accuracy can be achieved with a small, size-independent number of samples. 
To estimate the sufficient resource requirements more directly, one must analyze the Lorenz curve itself. 
This is shown in Fig.~\ref{fig:sqd-sparsity}(b), where we examine the scaling of $(1-\calL^{-1}(\tilde{\varepsilon}^2))N$ with system size $n$ for different target energy-estimation errors $\tilde{\varepsilon}$. 
For standard SQD, the scaling follows the same general trend as the Gini-coefficient analysis. 
As $\tilde\varepsilon$ decreases, the fitted parameter $g$ rapidly approaches one, meaning that the sufficient subspace dimension and number of shots approach the full Hilbert-space dimension, as discussed in Theorem~\ref{theo:sqd}. 
By contrast, for the filtered states
$\hat U_{\rm MPS,I}^{\dagger}\ket{\psi_g}$ and $\hat U_{\rm MPS,II}^{\dagger}\ket{\psi_g}$,
the fitted parameter $g$ approaches the sparsest-case value $g=0$ as $n$ increases for a broad range of target accuracies, namely $\tilde{\varepsilon}^2\leqslant 10^{-2}$ and $\tilde{\varepsilon}^2\leqslant 10^{-6}$, respectively. 
This indicates that, for the filter-assisted sampler, the resource requirements $N_R$ and $N_S$ exhibit only weak system-size dependence. 
Nevertheless, the offset $c$ in the scaling relation may still lead to a large absolute subspace dimension when extremely high accuracy, i.e., very small $\tilde\varepsilon$, is desired.

We next apply the standard SQD and FSQD protocols to estimate the ground-state energy of the quantum Ising model in Eq.~\eqref{eq:ising-model}. 
For the standard SQD calculations, the sampled subspace is constructed by drawing bitstrings directly from the ground state $\ket{\psi_g}$ of the original Hamiltonian $\hat H$. 
For FSQD, we first apply the circuit filter to obtain 
\begin{equation}\label{eq:psip}
\ket{\psi_g'} = \hat U_{\rm MPS}^{\dagger}\ket{\psi_g}, 
\end{equation}
where $\hat U_{\rm MPS}$ denotes either $\hat U_{\rm MPS,I}$ or $\hat U_{\rm MPS,II}$. 
Because the filter is designed to concentrate the dominant component of the ground state onto $\ket{\bm0}$, repeated sampling of $\ket{\bm0}$ does not efficiently enlarge the sampled subspace. 
Therefore, following the protocol of Sec.~\ref{sec:fsqd}, we remove the $\ket{\bm0}$ component and sample from the normalized projected state
\begin{equation}
\ket{\bar{\bm0}}_g\coloneqq\frac{
\hat P_{\ket{\bar{\bm0}}}\hat U_{\rm MPS}^{\dagger}\ket{\psi_g}
}{
\left\|
\hat P_{\ket{\bar{\bm0}}}\hat U_{\rm MPS}^{\dagger}\ket{\psi_g}
\right\|_2
}.\label{eq:0_bar}
\end{equation}
In practice, this projected state is constructed numerically by contracting the ground-state MPS $|\psi_g\rangle$ with the circuit filter $\hat U_{\rm MPS}$. Equivalently, we may obtain the ground-state MPS $\ket{\psi_g'} $ of the filtered Hamiltonian $\hat H'\coloneqq\hat U_{\rm MPS}^\dagger\hat H\hat U_{\rm MPS}$ directly. 
In either case, we then evaluate the overlap with $\ket{\bm 0}$, 
subtract the corresponding $\ket{\bm0}$ component, 
and normalize the resulting MPS. 
After normalization, we draw $N_S$ samples from the projected sampler state and retain the $N_R$ most frequent bitstrings.


Given the selected $N_R$ bitstrings, we construct the truncated Hamiltonian matrix in the sampled subspace. 
For FSQD, the relevant matrix elements are
$\mel{x_I}{\hat H'}{x_J}$,
where $\ket{x_I}$ and $\ket{x_J}$ are product-state MPSs corresponding to the selected bitstrings. 
These matrix elements are evaluated by standard MPO--MPS contractions.
The construction for the original Hamiltonian $\hat H$ in the standard SQD calculations proceeds in the same manner. 
Below, we present numerical results for $n=20$, $50$, and $100$.

Figures~\ref{fig:sqd-energy}(a)-\ref{fig:sqd-energy}(c) show the ground-state energy-estimation error 
\begin{equation}\label{eq:energy-err}
  \epsilon\coloneqq E_{\rm (F)SQD}-E_{\rm exact},  
\end{equation}
as a function of the number of shots $N_S$. 
Here, $E_{\rm (F)SQD}$ denotes the lowest eigenvalue of the truncated Hamiltonian matrix constructed in the sampled subspace, while $E_{\rm exact}$ is the essentially exact ground-state energy obtained from a high-accuracy DMRG calculation. 
We compare standard SQD using samples drawn from $\ket{\psi_g}$ with FSQD based on the projected filtered sampler states
$\hat P_{\ket{\bar{\bm0}}}\hat U_{\rm MPS,I}^{\dagger}\ket{\psi_g}$
and
$\hat P_{\ket{\bar{\bm0}}}\hat U_{\rm MPS,II}^{\dagger}\ket{\psi_g}$.
Figures~\ref{fig:sqd-energy}(d)--\ref{fig:sqd-energy}(f) show the corresponding sampled-subspace dimension $N_R$ as a function of $N_S$. 
A cutoff is imposed when $N_R$ exceeds the predefined threshold used in the numerical diagonalization of the truncated Hamiltonian matrix. 
Specifically, we set $N_R\leqslant5000$ for $n=20$ and $N_ R\leqslant8000$ for $n=50$ and $100$. 
%
A comparison with the dashed curves in Figs.~\ref{fig:sqd-energy}(d)--\ref{fig:sqd-energy}(f), which correspond to the unprojected filtered samplers, shows that the projected filtered samplers expand the sampled subspace more efficiently at fixed $N_S$. 
This directly reflects the role of the projection step in suppressing the dominant $\ket{\bm0}$ contribution and redistributing the sampling weight toward basis states useful for subspace construction.

\begin{figure}[bth]
    \includegraphics[width=\linewidth]{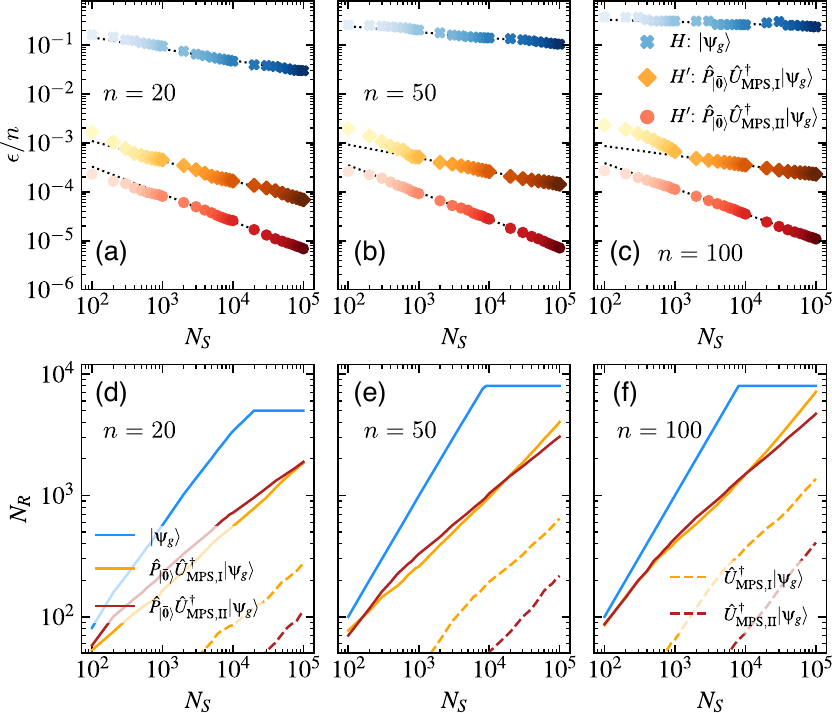}
    \caption{
    (a)--(c) Ground-state energy-estimation error $\epsilon/n$ versus the number of shots $N_S$ for standard SQD and FSQD. 
    The sampler state for SQD is the ground state $|\psi_g\rangle$ of the original Hamiltonian $\hat H$, wheras FSQD uses the projected filtered states $\hat P_{\ket{\bar{\bm0}}}\hat U_{\rm MPS,I}^{\dagger}\ket{\psi_g}$ and $\hat P_{\ket{\bar{\bm0}}}\hat U_{\rm MPS,II}^{\dagger}\ket{\psi_g}$. 
    The color intensity encodes $N_S$.
    Black dotted lines show fits of the form $\epsilon/n \propto N_S^{-\tau}$ using data in the regime $N_S\geqslant1000$. 
    The exact ground-state energies are $E_{\rm exact}/n = -1.285$, -1.300, and -1.305 for $n=20$, 50, and 100, respectively. 
    (d)--(f) Corresponding sampled-subspace dimension $N_R$ as a function of $N_S$, with cutoffs imposed at $N_R=5000$ for $n=20$ and $N_R=8000$ for $n=50$ and $100$. 
    For comparison, the results for unprojected filtered states $\hat U_{\rm MPS,I}^{\dagger}\ket{\psi_g}$ and $\hat U_{\rm MPS,II}^{\dagger}\ket{\psi_g}$ are also shown by dashed lines.
    }
    \label{fig:sqd-energy}
\end{figure}

\begin{table}[tbh]
\caption{\label{tab:sqd-rate}
Fitted decay exponent $\tau$ of the energy-estimation error, defined through $\epsilon/n \propto N_S^{-\tau}$. 
These exponents are obtained from numerical simulations for three different sampler states and from quantum-hardware experiments on \texttt{ibm\_kobe} for four experimentally relevant sampler constructions. 
The exponent $\tau$ quantifies how efficiently the estimation error decreases as the number of shots, or equivalently the number of circuit measurements in an experimental implementation, is increased.}
\begin{ruledtabular}
\begin{tabular}{lllll}
{Method} & {Sampler} & {$n=20$} & {$n=50$} & {$n=100$} \\
\hline
\multicolumn{5}{c}{simulation} \\
\hline
SQD & $\ket{\psi_g}$ & 0.235(7) & 0.120(5) & 0.04(2) \\
FSQD & $\hat P_{\ket{\bar{\bm0}}}\umpsdI\ket{\psi_g}$ & 0.40(2) & 0.274(4) & 0.195(2) \\
FSQD & $\hat P_{\ket{\bar{\bm0}}}\umpsdII\ket{\psi_g}$ & 0.559(5) & 0.553(8) & 0.518(4) \\
\hline
\multicolumn{5}{c}{experiment (\texttt{ibm\_kobe})} \\
\hline
SQD & $\ket*{\tilde\psi_g}$ & 0.328(3) & 0.149(7) & 0.002(2) \\
FSQD & $\umpsdI\ket*{\tilde\psi_g}$ & 0.68(3) & 0.51(3) & 0.18(2) \\
FSQD & $\hat U_{\ket{\bar{\bm0}}}\umpsdI\ket*{\tilde\psi_g}$ & 0.62(2) & 0.57(2) & 0.046(3) \\
FSQD & $\ket{\Psi_{\rm s}}$ & 0.60(3) & 0.61(4) & 0.12(2) \\
\end{tabular}
\end{ruledtabular}
\end{table}

Overall, FSQD yields energy-estimation errors that are several orders of magnitude smaller than those of the standard SQD method. 
The improvement is more pronounced for the two-layer filter than for the one-layer filter, consistent with the enhanced concentration of
$\hat U_{\rm MPS,II}^{\dagger}\ket{\psi_g}$ near $\ket{\bm0}$ as well as the reduced off-diagonal components of the filtered Hamiltonian matrix in the row and column associated with $|\bm 0\rangle$, as already seen in Fig.~\ref{fig:sqd-filter}(a) for the four-site system. 
This demonstrates that the performance gain originates from the MPS-circuit filter, which approximates the ground state with controlled accuracy and enhances the sparsity of the sampler while simultaneously making the filtered Hamiltonian more favorable for subspace construction around $\ket{\bm 0}$.

As $N_S$ increases, the sampled subspace expands and the estimation error decreases. 
To quantify this behavior, we fit the scaling relation
$\epsilon/n \propto N_S^{-\tau}$ in the regime $N_S\geqslant1000$, as indicated by the dotted lines in Figs.~\ref{fig:sqd-energy}(a)--\ref{fig:sqd-energy}(c). 
The fitted exponents are summarized in Table~\ref{tab:sqd-rate}. 
For the standard SQD, $\tau$ is small and decreases rapidly with increasing system size, indicating that adding more shots becomes increasingly inefficient for large systems. 
By contrast, FSQD exhibits substantially larger exponents, especially for the two-layer filter, and these exponents remain only weakly dependent on system size. 
Thus, the filter-assisted construction not only achieves much higher accuracy at a fixed number of shots, but also continues to benefit from further enlargement of the sampled subspace. 
For example, $\tau\simeq0.5$ implies that increasing $N_S$ by two orders of magnitude reduces the energy error by approximately one order of magnitude, demonstrating a clear sampling-overhead advantage over standard SQD. 


We also compare the observed errors with the theoretical estimates based on Theorem~\ref{theo:sqd}. 
As an example, for the $n=20$ case in Fig.~\ref{fig:sqd-energy}(a), the observed energy-estimation error at fixed $N_R$ is substantially smaller than the value suggested by the theorem. 
At $N_R=10^3$, the corresponding values of $N_S$ for 
$\hat U_{\rm MPS,I}^{\dagger}\ket{\psi_g}$ and
$\hat U_{\rm MPS,II}^{\dagger}\ket{\psi_g}$ 
are approximately $3\times10^{4}$ and $2\times10^{4}$, respectively, as shown in Fig.~\ref{fig:sqd-energy}(d). 
Using $\rho(\hat H)\propto n$, the corresponding rescaled errors $\tilde{\epsilon}\propto \epsilon/\rho(\hat H)$ are of order $10^{-4}$ and $2\times10^{-5}$.
By contrast, combined Theorem~\ref{theo:sqd} with the values of $\calL^{-1}$ extracted from Fig.~\ref{fig:sqd-sparsity}(b) at $\log_2N_R\approx10$ yields theoretical 
upper bounds of order $10^{-2}$ and $3\times10^{-3}$, respectively. 
Thus, although Theorem~\ref{theo:sqd} provides a rigorous upper-bound scale, the present benchmark indicates that the resulting upper bound is somewhat pessimistic 
when compared with the observed error.
Nevertheless, this discrepancy does not affect the scaling analysis of the resource requirements with system size, which is the central point of the theorem.

\begin{figure}[tb]
    \includegraphics[width=\linewidth]{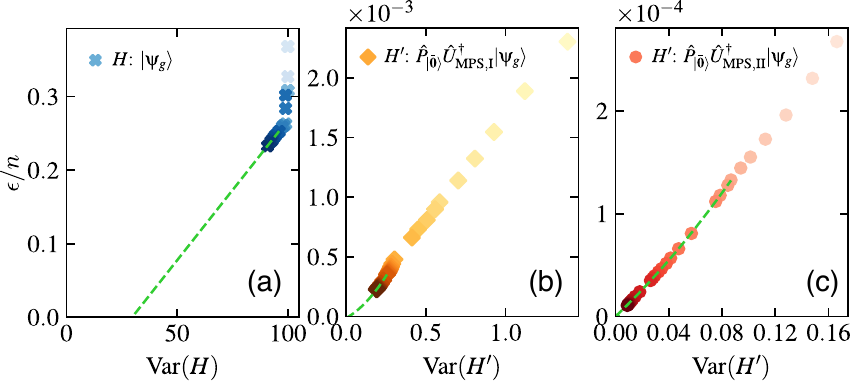}
    \caption{
    Ground-state energy-estimation error $\epsilon/n$ versus the energy variance $\mathrm{Var}(O) = \langle\nu_0|\hat O^2|\nu_0\rangle - \langle\nu_0|\hat O|\nu_0\rangle^2$ for $n=100$. 
    For standard SQD, $\hat O=\hat H$, whereas for FSQD, $\hat O=\hat H'$. 
    The color intensity corresponds to the number of shots $N_S$, matching the data shown in Fig.~\ref{fig:sqd-energy}(c). 
    Green dashed lines indicate a linear fit in (a) and quadratic fits in (b) and (c).
    }
    \label{fig:sqd-energyerr}
\end{figure}

A more accurate estimate of the ground-state energy can be obtained by energy-variance extrapolation~\cite{Sorella2001Generalized, Mizusaki2003Precise}, although the resulting estimate is no longer variational. 
For each sampled subspace, let $\ket{\nu_0}$ denote the lowest-energy eigenvector of the truncated Hamiltonian matrix. 
In our numerical simulations, we 
compute the variance
\begin{equation}
\mathrm{Var}(H)
=
\expval*{\hat H^2}{\nu_0}
-
\expval*{\hat H}{\nu_0}^2
\end{equation}
for standard SQD.
For FSQD, the same analysis is carried out using the filtered Hamiltonian $\hat H'$, yielding $\mathrm{Var}(H')$. 
Following the energy-variance analysis of Ref.~\cite{robledo2024chemistry_pub}, we plot the energy-estimation error $\epsilon$ against the corresponding variance for the $n=100$ system in Fig.~\ref{fig:sqd-energyerr}. 
The color intensity again represents the number of shots $N_S$, corresponding to those used in Fig.~\ref{fig:sqd-energy}(c).
Extrapolating the energy-variance relation to the zero-variance limit provides an estimate of the converged energy in the $N_S\to\infty$ limit, i.e., the exact ground-state energy.


For standard SQD, shown in Fig.~\ref{fig:sqd-energyerr}(a), the large-$N_S$ data exhibit an approximately linear dependence on the variance, although the variance itself remains large. 
The extrapolation gives $\epsilon_0/n=-0.113\pm0.007$, indicating that the sampled subspace is still far from the exact ground-state subspace and that the variance extrapolation does not yield a meaningful estimation. 
By contrast, the FSQD results in Figs.~\ref{fig:sqd-energyerr}(b) and \ref{fig:sqd-energyerr}(c), corresponding to the one- and two-layer filters, show much smaller variances and, simultaneously, much smaller ground-state energy-estimation errors as $N_S$ increases. 
Using a quadratic fit, we find that the zero-variance extrapolated offsets are
$\epsilon_0/n=(-1.28\pm1.30)\times10^{-5}$
and
$\epsilon_0/n=(0.56\pm1.99)\times10^{-7}$
for the one- and two-layer filters, respectively.
Thus, energy-variance extrapolation provides a further improvement beyond the already enhanced finite-sample accuracy of FSQD.


To summarize, the numerical benchmark demonstrates that FSQD substantially improves the sampling efficiency of subspace diagonalization for the quantum Ising model. 
The MPS-circuit filter strongly enhances the concentration of the ground-state wavefunction in the computational basis, thereby reducing the effective resource requirements for subspace construction. 
As a consequence, compared with the standard SQD protocol, FSQD achieves markedly smaller ground-state energy-estimation errors at a fixed number of shots, exhibits a much weaker system-size dependence of the sampling overhead, and retains its advantage as the system size increases. 
These numerical results are fully consistent with the Lorenz-curve and Gini-coefficient analyses presented in Sec.~\ref{sec:sparsity}, and confirm that the main advantage of FSQD lies in reshaping the sampling distribution itself rather than merely truncating the sampled subspace more aggressively.

Having established the intrinsic advantage of FSQD in the idealized noise-free setting, we now turn to its experimental realization on a quantum device. 
In the next section, we implement the corresponding SQD and FSQD protocols on the IBM quantum computer \texttt{ibm\_kobe} and examine to what extent the sampling-overhead advantage observed in the numerical benchmark persists under realistic hardware conditions. 
In particular, we compare different experimentally accessible sampler constructions and assess their impact on the accuracy and scaling of the resulting energy estimates.

\section{Experimental demonstration} 
\label{sec:expr}

\begin{figure*}[tb]
    \includegraphics[width=\linewidth]{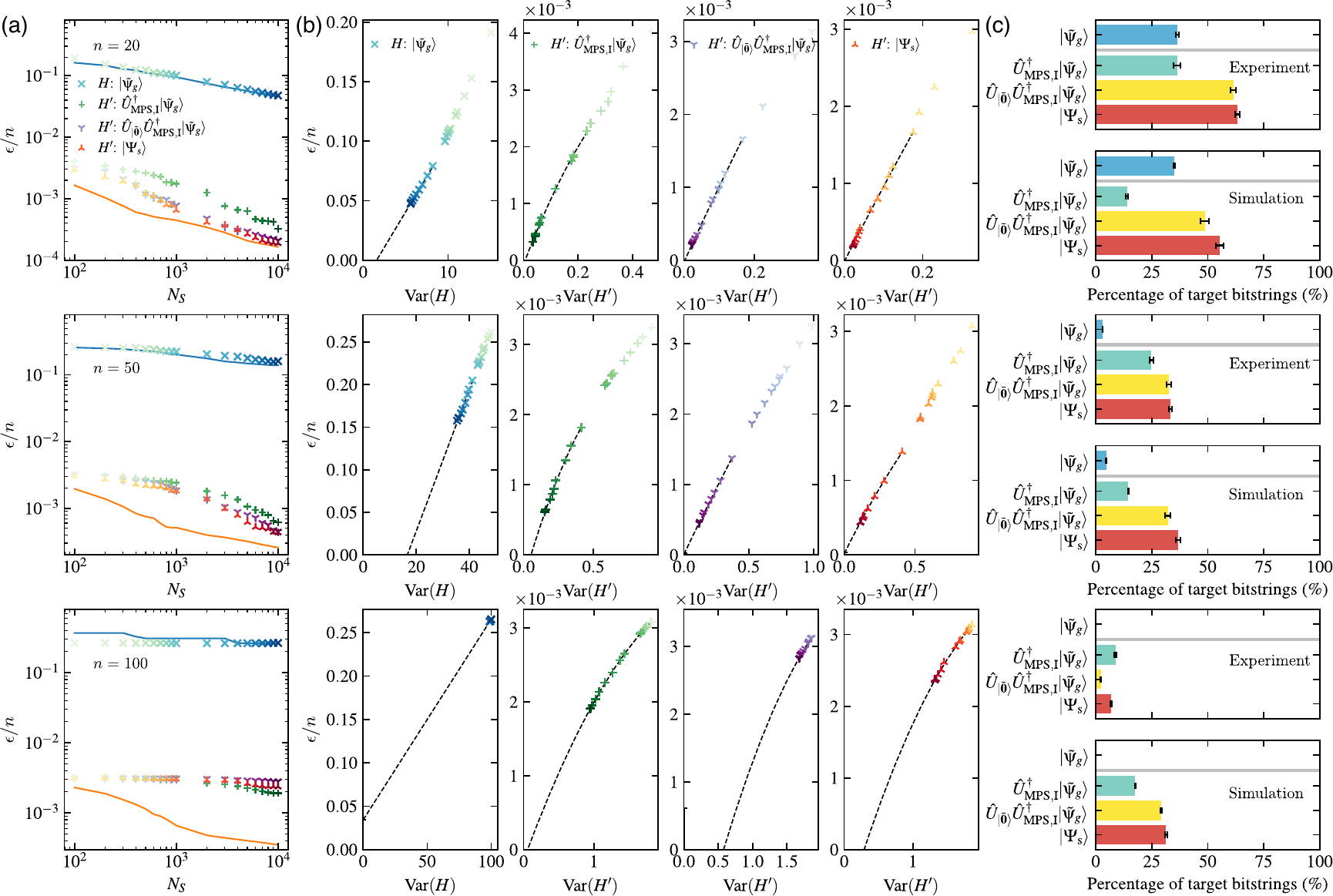}
    \caption{
    Experimental results for $n=20$ (top panels), $50$ (middle panels), and $100$ (bottom panels), using four different samplers $\ket*{\tilde\psi_g}$, $\umpsdI\ket*{\tilde\psi_g}$, $\hat U_{\ket{\bar{\bm0}}}\umpsdI\ket*{\tilde\psi_g}$, and $\ket{\Psi_{\rm s}}$. 
    (a) Ground-state energy-estimation error $\epsilon/n$ versus the number of shots $N_S$. 
    The color intensity corresponds to $N_S$, and the two solid curves reproduce the corresponding noise-free references from Figs.~\ref{fig:sqd-energy}(a)--\ref{fig:sqd-energy}(c), namely those for $|\psi_g\rangle$ (blue) and $\hat P_{\ket{\bar{\bm0}}}\hat U_{\rm MPS,I}^{\dagger}\ket{\psi_g}$ (red). 
    (b) Ground-state energy-estimation error $\epsilon/n$ versus the energy variance. 
    The black dashed lines denote linear fits for standard SQD and quadratic fits for the FSQD samplers. 
    (c) Percentage of target bitstrings obtained from hardware experiments and corresponding noiseless simulations. 
    }
    \label{fig:sqd-expr}
\end{figure*}

To assess the practical performance of FSQD on a real quantum computer, we implement both the standard SQD and FSQD protocols on IBM Quantum's Heron R2 processor \texttt{ibm\_kobe}~\cite{Qiskit}, applied to the quantum Ising model introduced in Sec.~\ref{sec:ising-model}. 
Additional experimental details, including the device qubit layout, coupling map, calibration data, and circuit compilation strategy, are provided in Appendix~\ref{appx:IV}.

Before presenting the experimental results, we summarize the four sampler states used to generate the quantum-selected computational basis states. 
First, for the standard SQD protocol, we prepare an approximate ground state of the original Hamiltonian $\hat H$, denoted by $\ket*{\tilde\psi_g}$, using the MPS circuit-encoding method described in Sec.~\ref{sec:automatic}. 
We use a circuit consisting of three layers of local two-qubit unitaries arranged in a brick-wall structure for $n=20$, and two such layers for $n=50$ and $n=100$.  

Second, as the simplest filter-assisted sampler, we apply the single-layer circuit filter to obtain
\begin{equation}
    \hat U_{\rm MPS,I}^{\dagger}\ket{\tilde\psi_g}.
\end{equation}
This state is experimentally accessible, but, as already discussed in Sec.~\ref{sec:fsqd}, it is theoretically expected to remain strongly concentrated on $\ket{\bm0}$ and therefore may not be efficient for subspace expansion.

Third, to mimic the projected sampler required in FSQD, we approximate the non-unitary projector $\hat P_{\ket{\bar{\bm0}}}$ by a unitary operator $\hat U_{\ket{\bar{\bm0}}}$ acting on the filtered state [see Eq.~(\ref{eq:U0})]. 
This yields the sequentially encoded projected sampler
\begin{equation}
    \hat U_{\ket{\bar{\bm0}}}\hat U_{\rm MPS,I}^{\dagger}\ket{\tilde\psi_g}.
\end{equation}
Here, $\hat U_{\ket{\bar{\bm0}}}$ is obtained by circuit encoding of the action of $\hat P_{\ket{\bar{\bm0}}}$ on the filtered state $\hat U_{\rm MPS,I}^{\dagger}\ket{\tilde\psi_g}$ and is implemented using a circuit consisting of two brick-wall layers of local two-qubit unitaries. 

Finally, as an alternative projected-state construction, we directly encode the normalized projected filtered state into a quantum circuit, yielding the sampler $\ket{\Psi_{\rm s}}$.
More specifically, $\ket{\Psi_{\rm s}}$ is obtained by direct circuit encoding of the normalized projected MPS corresponding to
\begin{equation}
    \frac{
    \hat P_{\ket{\bar{\bm0}}}\hat U_{\rm MPS,I}^{\dagger}\ket{\psi_g}
    }{
    \left\|
    \hat P_{\ket{\bar{\bm0}}}\hat U_{\rm MPS,I}^{\dagger}\ket{\psi_g}
    \right\|_2
    }.
\end{equation}
To mimic realistic classical preprocessing, we use an MPS approximation to $\ket{\psi_g}$ with bond dimension $\chi=6$ when constructing this target state.  
In this direct-encoding approach, we again use a circuit consisting of three brick-wall layers of local two-qubit unitaries for $n=20$, and two such layers for $n=50$ and $100$. 



Further details on the construction of $\hat U_{\ket{\bar{\bm0}}}$, the direct-encoding strategy, and the associated algorithmic errors are given in Appendix~\ref{appx:III}.
In summary, the four sampler states considered experimentally are $\ket*{\tilde\psi_g}$, $\umpsdI\ket*{\tilde\psi_g}$, $\hat U_{\ket{\bar{\bm0}}}\umpsdI\ket*{\tilde\psi_g}$, and $\ket{\Psi_{\rm s}}$. 
For each sampler, we measure all qubits in the computational basis for up to $N_S=10^4$ shots and use the resulting bitstrings to construct the sampled subspace.

Figure~\ref{fig:sqd-expr}(a) shows the ground-state energy-estimation error $\epsilon/n$ versus $N_S$, in analogy with the numerical results shown in Figs.~\ref{fig:sqd-energy}(a)--\ref{fig:sqd-energy}(c). 
The colored markers denote the experimental results obtained directly from raw bitstring samples. 
The solid curves show the corresponding noise-free numerical references based on the ideal sampler states $|\psi_g\rangle$ and $\hat P_{\ket{\bar{\bm0}}}\hat U_{\rm MPS,I}^{\dagger}\ket{\psi_g}$, i.e., the same numerical results already presented in Figs.~\ref{fig:sqd-energy}(a)--\ref{fig:sqd-energy}(c). 
Consistent with the numerical simulations, the filter-assisted samplers generally achieve smaller ground-state energy-estimation errors than the standard SQD sampler for all system sizes considered. 
Moreover, the fitted decay exponents $\tau$, listed in Table~\ref{tab:sqd-rate}, show that the filter-assisted samplers typically exhibit faster error reduction than the standard SQD sampler as $N_S$ increases. 

Among the three filter-assisted samplers, the two projected-state constructions, namely $\hat U_{\ket{\bar{\bm0}}}\hat U_{\rm MPS,I}^{\dagger}\ket{\tilde\psi_g}$ and $\ket{\Psi_{\rm s}}$, yield noticeably smaller errors than the directly filtered state $\hat U_{\rm MPS,I}^{\dagger}\ket{\tilde\psi_g}$ for $n=20$ and $50$, indicating that suppressing the $\ket{\bm0}$ component remains beneficial for subspace expansion even on hardware. 
For $n=100$, however, this tendency becomes less clear, and the directly filtered state $\hat U_{\rm MPS,I}^{\dagger}\ket{\tilde\psi_g}$ can yield a better energy estimate in the experimentally accessible range of $N_S$, especially for large $N_S$. 
This behavior is plausibly attributed to hardware noise, which broadens the sampling distribution and alters the balance between concentration on $\ket{\bm0}$ and efficient subspace expansion. 
On the other hand, the experimentally fitted decay exponents $\tau$ remain of similar magnitude among the three filtered samplers, suggesting that their large-$N_S$ scaling behavior is less clearly distinguishable on present hardware than in the ideal numerical benchmark. 
Figure~\ref{fig:sqd-expr}(b) presents the corresponding energy-variance relations and extrapolations to zero variance limit, providing a complementary view of the experimental performance..



To investigate the impact of hardware noise more directly, we define the target bitstrings $\{x_k\}_{k=1}^{N_R}$ as those obtained from the corresponding ideal sampler at a given number of shots $N_S$. 
More specifically, for standard SQD the target set is generated from the exact ground state $\ket{\psi_g}$, 
whereas for the filter-assisted samplers the target set is generated from the ideal projected state $\ket{\bar{\bm0}}_g$ defined in Eq.~(\ref{eq:0_bar}) with $\hat U_{\rm MPS}=\hat U_{\rm MPS,I}$. 
For each sampler state, we obtain a set of sampled bitstrings $\{\tilde x_k\}$ from $N_S$ shots and compute the fraction $\theta \coloneqq \bar N_R/N_R$ of target bitstrings recovered in the sampled set, where $\{\bar x_k\}_{k=1}^{\bar N_R} = \{x_k\}\cap\{\tilde x_k\}$. 
For this analysis, we fix $N_S=10^4$ and use $M_c=5$ independent runs to compute the sample mean and standard deviation,
\begin{equation}
    \bar{\theta}
    =
    \frac{1}{M_c}\sum_{c=1}^{M_c}\theta_c,
    \quad
    \epsilon_\theta
    =
    \sqrt{
    \frac{1}{(M_c-1)}
    \sum_{c=1}^{M_c}\bigl(\theta_c-\bar{\theta}\bigr)^2
    }.
\end{equation}
These are indicated by the rectangular bars and error bars, respectively, in Fig.~\ref{fig:sqd-expr}(c).

The fraction of target bitstrings provides an interpretable measure of sampler quality.
For example, although the MPS-based encoding of the ground state is highly effective, the target-bitstring fraction for the ground-state sampler becomes very small at large system size, reflecting the low sparsity of the ground state itself.
For the filter-assisted samplers, the simulated results decrease with increasing system size due to the error arising from imperfect circuit encoding of the intended target sampler, whereas in experiment this percentage decreases even more rapidly, indicating significant hardware noise.
At $n=20$ and $50$, the experiment shows an increase in the target-bitstring percentage, suggesting that noise broadens the sampling distribution and improves subspace expansion. This effect is especially visible for the directly filtered state $\umpsdI\ket*{\tilde\psi_g}$ and explains the modest reduction of its energy-estimation error. Such a noise-assisted effect disappears for $n=100$.

\section{Summary and Outlook}\label{sec:summary}

In this work, we introduced filter-assisted subspace quantum diagonalization (FSQD), a quantum-centric framework for improving sample-based subspace diagonalization by engineering the sampling distribution itself. 
The central idea is to apply a quantum filter, i.e., a unitary transformation of the Hamiltonian, such that the ground-state wavefunction of the transformed problem becomes substantially more sparse in the computational basis. 
This directly mitigates the intrinsic trade-off between sampling efficiency and wavefunction sparsity that limits standard SQD and related sample-based approaches.

Using the Lorenz curve and the Gini coefficient as quantitative measures of wavefunction sparsity, we established a direct connection between the concentration of the measurement distribution and the resource requirements of SQD. 
This allowed us to derive rigorous bounds on the sampled-subspace dimension and the number of measurement shots required to achieve a target accuracy. 
These results clarify why standard SQD can suffer from exponentially growing sampling overhead in strongly correlated systems and, conversely, why enhancing the sparsity of the sampling distribution is essential for scalable performance.

To realize the quantum filter, we employed a tensor-network-based circuit-encoding method that maps an approximate target state to a shallow quantum circuit composed of local two-qubit unitaries. 
This construction provides a practical route for implementing FSQD both in classical numerical simulations and on gate-based quantum hardware. 
Importantly, the method also admits projected-sampler constructions that suppress the dominant $\ket{\bm 0}$ component and thereby restore efficient subspace expansion.

We benchmarked FSQD for the quantum Ising model with transverse and longitudinal fields using both ideal numerical simulations and quantum-hardware experiments. 
In the numerical benchmark, FSQD substantially enhanced wavefunction sparsity, reduced the system-size dependence of the effective sampling overhead, and achieved ground-state energy-estimation errors that were orders of magnitude smaller than those of standard SQD at a comparable number of samples. 
Energy-variance extrapolation provided a further improvement, yielding highly accurate energy estimates for the filtered samplers. 
These results were fully consistent with the Lorenz-curve and Gini-coefficient analyses, confirming that the main advantage of FSQD lies in reshaping the sampling distribution rather than merely truncating the sampled subspace more aggressively.

We further demonstrated the protocol experimentally on IBM Quantum's \texttt{ibm\_kobe} processor using several practically realizable sampler constructions. 
The hardware results show that the advantage of FSQD persists on present-day quantum devices, especially for the 20-qubit case, where the projected-state samplers clearly outperform both standard SQD and the unprojected filtered sampler. 
Although FSQD continues to outperform standard SQD at larger system sizes, hardware noise and imperfect sampler preparation become the dominant limitations, making the differences among the filtered samplers less pronounced. 
These observations highlight both the promise of FSQD on near-term quantum hardware and the need for improved state preparation, better projector approximations, and classical post-processing techniques such as configuration recovery in order to extend its practical reach.

Finally, our results establish FSQD as a promising route toward scalable sample-based quantum many-body calculations. 
Because the protocol is formulated at the level of sampling distributions and subspace construction, it is not restricted to the quantum Ising model studied here and should be applicable more broadly to strongly correlated lattice models and quantum chemistry problems. 
An important direction for future work is to combine FSQD with improved sampler design, symmetry adaptation, and error-mitigation or post-processing strategies, thereby further extending the practical reach of quantum-selected subspace methods on larger and noisier quantum devices. 


\section*{Acknowledgments}
We are grateful to Xiaoyang Wang for valuable discussions. 
Part of the numerical simulations was performed using the HOKUSAI supercomputer at RIKEN (Project ID RB240003). 
A portion of this work is based on results obtained from Project No.~JPNP20017, supported by the New Energy and Industrial Technology Development Organization (NEDO). 
This study was also supported by JSPS KAKENHI Grants No.~JP21H04446 and No.~JP26K06972. 
Additional support was provided by JST COI-NEXT (Grant No.~JPMJPF2221) and the Program for Promoting Research on the Supercomputer Fugaku (Grant No.~MXP1020230411) by MEXT, Japan. 
We also acknowledge support from the UTokyo Quantum Initiative, the RIKEN TRIP initiative (RIKEN Quantum), and the COE research grant in computational science from Hyogo Prefecture and Kobe City through the Foundation for Computational Science.

\section*{Data Availability}
The data that support the findings of this study, as well as the code used to generate the figures in this paper, are available from the corresponding author upon reasonable request. 

\appendix

\section{Monotonic improvement of the SQD ground-state energy estimate}\label{appx:I}

In this appendix, we show that the ground-state energy estimate obtained from SQD is monotonically non-increasing as the sampled-subspace dimension $N_R$ is increased. 
More precisely, when the sampled subspace is enlarged by adding one additional computational basis state, the lowest eigenvalue of the corresponding truncated Hamiltonian cannot increase.

Let $H_{N_R}$ denote the truncated Hamiltonian matrix in a sampled subspace of dimension $N_R$, and let its eigenvalues be ordered as
\begin{equation}
E_0 \leqslant E_1 \leqslant \dots \leqslant E_{N_R-1}.
\end{equation}
Now enlarge the sampled subspace by adding one additional computational basis state.
The corresponding truncated Hamiltonian matrix in the enlarged $(N_R+1)$-dimensional subspace can be written as
\begin{equation}
H_{N_R+1}=
\begin{pmatrix}
H_{N_R} & H\vec e_{N_R+1} \\
\vec e_{N_R+1}^{\;T}H & a_1
\end{pmatrix},
\end{equation}
where $\vec e_{N_R+1}$ denotes the basis vector associated with the added bitstring, and $a_1=\vec e_{N_R+1}^{\;T}H\vec e_{N_R+1}$. 
Here $H$ denotes the matrix representation of the full Hamiltonian $\hat H$ in the computational basis.

Let $\ket{\nu_0}$ be the normalized eigenstate of $H_{N_R}$ with the lowest eigenvalue $E_0$, and let $\vec{\nu}_0$ denote its vector representation in the computational basis. 
Following the Lanczos construction~\cite{koch2011}, we define
\begin{equation}
a_0=\expval*{\hat H}{\nu_0}
=\vec{\nu}_0^{\;T}H_{N_R}\vec{\nu}_0
=E_0,
\end{equation}
and the next normalized basis vector $\ket{\nu_1}$ by
\begin{equation}
b_1\ket{\nu_1}=\hat H\ket{\nu_0}-a_0\ket{\nu_0},
\end{equation}
where
\begin{equation}
b_1=\mel{\nu_1}{\hat H}{\nu_0}.
\end{equation}
Restricting $\hat H$ to the two-dimensional Krylov subspace
\begin{equation}
\mathrm{span}(\ket{\nu_0},\hat H\ket{\nu_0}),
\end{equation}
we obtain the corresponding Hamiltonian matrix 
\begin{equation}
H_{\mathrm{span}(\ket{\nu_0},\hat H\ket{\nu_0})}
=
\begin{pmatrix}
a_0 & b_1 \\
b_1 & a_1
\end{pmatrix}.
\end{equation}
Its lowest eigenvalue is
\begin{equation}
\check E_0
=
\frac{1}{2}
\left[
a_0+a_1-\sqrt{(a_0-a_1)^2+4b_1^2}
\right].
\end{equation}
Since
\begin{equation}
\sqrt{(a_0-a_1)^2+4b_1^2}\ge |a_0-a_1|,
\end{equation}
we have
\begin{equation}
\check E_0
\leqslant
\frac{1}{2}(a_0+a_1-|a_0-a_1|)
=
\min(a_0,a_1)
\leqslant a_0=E_0.
\end{equation}
Therefore, the lowest energy in the enlarged subspace cannot exceed the lowest energy in the original sampled subspace.
This proves that the SQD ground-state energy estimate is monotonically non-increasing as $N_R$ increases.

Furthermore, when $a_1\geqslant a_0$, a Taylor expansion of $\check E_0$ in the small parameter $b_1$ gives
\begin{equation}
\check E_0-E_0
=
-\frac{b_1^2}{|a_0-a_1|}
+O(b_1^4).
\end{equation}
This shows that the rate of improvement depends on the coupling $b_1$ between the current approximate ground state and the newly added basis direction, as well as on the local energy scale $|a_0-a_1|$. 
In other words, the convergence speed is governed by detailed properties of the Hamiltonian matrix and by the overlap structure of the sampled subspace.


\section{Bounds on SQD Resource Requirements from the Lorenz Curve and Gini Coefficient}\label{appx:II}

In this appendix, we derive the relationship between wavefunction sparsity and the computational resources required by the standard SQD method, following an argument closely related to that of Ref.~\cite{robledo2024chemistry_pub}. As discussed in Sec.~\ref{sec:sparsity}, we use the Lorenz curve and the Gini coefficient as quantitative measures of concentration of the measurement distribution.

We begin by deriving a lower bound on the number of measurement shots required to sample the relevant bitstrings.
From the graphical interpretation of the Gini coefficient, we regard the first $GN$ bitstrings as the least prominent configurations and the remaining $(1-G)N$ bitstrings as the prominent ones.
Because the probabilities are sorted in non-decreasing order, every prominent configuration satisfies 
\begin{equation}
|c_I|^2 \geqslant \frac{\mathcal{L}'(G)}{N},
\qquad \text{for } I\geqslant GN,
\end{equation}
where $\calL'(G)$ denotes the slope of the Lorenz curve at the normalized index $x=G$. 
Since $\mathcal{L}(x)$ is convex and satisfies $\mathcal{L}(1)=1$, the tangent line at $x=G$ lies below the point $(1,1)$, implying
\[
\mathcal{L}'(G)\le \frac{1-\mathcal{L}(G)}{1-G}\le \frac{1}{1-G},
\]
where we used $\mathcal{L}(G)\ge 0$. 

If $N_S$ shots are performed, the probability that not all prominent configurations have been observed is upper bounded by
\begin{equation}\label{eq:bound_of_failure}
\begin{aligned}
p_{\rm fail}
&=
\sum_{I=GN+1}^{N}(1-|c_I|^2)^{N_S} \\
&\leqslant
\sum_{I=GN+1}^{N}\left(1-\frac{\mathcal{L}'(G)}{N}\right)^{N_S} \\
&\leqslant
(1-G)N\,e^{-N_S\mathcal{L}'(G)/N} \\
&\leqslant
\frac{N}{\mathcal{L}'(G)}\,e^{-N_S\mathcal{L}'(G)/N}.
\end{aligned}
\end{equation}
In deriving the third line, we used the elementary bound $1-a\le e^{-a}$ for real $a$. 
If we require the failure probability to be at most $\eta$ with $\eta>0$, then the right-hand side of Eq.~\eqref{eq:bound_of_failure} must satisfy $\leqslant \eta$. 
Hence, a sufficient number of shots is 
\begin{equation}\label{eq:bound_of_sample}
N_S \geqslant
\frac{\log\!\bigl(N/(\mathcal{L}'(G)\eta)\bigr)}{\mathcal{L}'(G)/N}
\geqslant
\frac{\log\!\bigl[(1-G)N/\eta\bigr]}{\mathcal{L}'(G)/N},
\end{equation}
where the first inequality follows from the last line of Eq.~\eqref{eq:bound_of_failure}, and the second follows from the third line of Eq.~\eqref{eq:bound_of_failure}. 
The same reasoning remains valid after replacing $G$ by an arbitrary normalized index $x\in[0,1]$. 
Note that for notational simplicity, we treat quantities such as $GN$ and $xN$ as integers; otherwise one may replace them by the corresponding floor or ceiling functions without affecting the argument.

We next prove Theorem~\ref{theo:sqd} by bounding the SQD deviation
\begin{equation}
\Delta_{\mathrm{SQD}}
\coloneqq
E_{\mathrm{SQD}}-\expval{\hat H}{\psi}.
\end{equation}
For a given sampled subspace, define the normalized truncated state 
\begin{equation}
    \ket{\psi_{ x }} \coloneqq \frac{1}{\sqrt{1-\calL(x)}}\sum_{I=xN+1}^{N}c_I\ket{x_I}, 
\end{equation}
where $x\in[0,1]$ denotes the normalized cutoff index. Here, we implicitly assume that the sampled subspace consists of the $(1-x)N$ most prominent bitstrings.  
The Euclidean distance between $\ket{\psi}$ and $\ket{\psi_{ x }}$ is 
\begin{equation}\label{eq:error-distance}
    \begin{aligned}
        \Vert \ket{\psi}\pm\ket{\psi_{ x }}\Vert_2^2 &= 2 \pm 2\frac{1}{\sqrt{1-\calL(x)}}\sum_{I=xN+1}^{N}|c_I|^2 \\   
        &= 2 \pm 2\sqrt{1-\calL(x)}.
    \end{aligned}
\end{equation} 
Let $H_{N_R}$ denote the truncated Hamiltonian matrix in the sampled subspace, and let $\ket{\nu_0}$ be its lowest eigenstate. By construction of the sampled subspace and because $\ket{\nu_0}$ minimizes the energy within it, we have
\begin{equation}\label{eq:error_expval}
    \begin{aligned}
        &\expval{\hat H}{\nu_0} = \expval{H_{N_R}}{\nu_0} \\
        & \leqslant \expval{H_{N_R}}{\psi_{ x }} = \expval{\hat H}{\psi_{ x }}.
    \end{aligned}
\end{equation}

Now define $\ket{\psi'}\coloneqq\ket{\psi_x}-\ket{\psi}$. Then 
\begin{equation}\label{eq:use_psi_prime}
    \expval{\hat H}{\psi_{ x }} - \expval{\hat H}{\psi} = \mel{\psi'}{\hat H}{\psi_{ x }} + \mel{\psi}{\hat H}{\psi'}. 
\end{equation}
Taking the absolute value and applying the triangle inequality yields 
$|\expval{\hat H}{\psi_{ x }} - \expval{\hat H}{\psi}|   \leqslant |\mel{\psi'}{\hat H}{\psi_{ x }}| + |\mel{\psi}{\hat H}{\psi'}|$. 
Using the Cauchy--Schwarz inequality, 
$|\braket*{\psi}{\hat H\ket{\psi'}}| \leqslant \Vert\psi\Vert_2 \Vert\hat H\psi'\Vert_2=\Vert\hat H\psi'\Vert_2$ and $|\braket*{\psi_{ x }}{\hat H\ket{\psi'}}| \leqslant \Vert\psi_{ x }\Vert_2 \Vert\hat H\psi'\Vert_2=\Vert\hat H\psi'\Vert_2$, we obtain the inequality 
\begin{equation}
    |\expval{\hat H}{\psi_{ x }} - \expval{\hat H}{\psi}| \leqslant 2 \Vert\hat H\psi'\Vert_2 \leqslant 2 \rho(\hat H)\Vert\psi'\Vert_2,
\end{equation}
where $\rho(\hat H)$ denotes the spectral norm of the Hermitian operator $\hat H$, i.e., the largest singular value of the matrix $H$. Since $H$ is Hermitian, this is equal to the maximum absolute value of its eigenvalues $\lambda_1,\dots,\lambda_N$.

Combining this with Eqs.~\eqref{eq:error-distance} and \eqref{eq:error_expval}, we obtain the upper bound on $\Delta_{\mathrm{SQD}}$: 
\begin{equation}\label{eq:error-bound}
    \expval{\hat H}{\nu_0} - \expval{\hat H}{\psi} \leqslant 2\sqrt{2}\rho(\hat H)\big(1-\sqrt{1-\calL(x)}\big)^{1/2}
\end{equation}
If this upper bound is required to be at most $\varepsilon$, then Eq.~\eqref{eq:error-bound} implies 
\begin{equation}
    \calL(x) \leqslant 1-(1-\tilde\varepsilon^2)^2 = 2\tilde\varepsilon^2 - \tilde\varepsilon^4,
\end{equation}
where $\tilde{\varepsilon}=\varepsilon/(2\sqrt{2}\rho(\hat H))$.
Assuming $\tilde{\varepsilon}\leqslant 1$, we may use the simpler sufficient condition
\begin{equation}
\mathcal{L}(x)\leqslant \tilde\varepsilon^2,
\end{equation}
because $\tilde{\varepsilon}^2 \leqslant 2\tilde{\varepsilon}^2-\tilde{\varepsilon}^4$, which gives 
\begin{equation}
    1-x \geqslant 1-\calL^{-1}(\tilde\varepsilon^2).
\end{equation}
Therefore, a sufficient sampled-subspace dimension is
\begin{equation}
N_R=(1-\mathcal{L}^{-1}(\tilde\varepsilon^2))N.
\end{equation}
By Eq.~\eqref{eq:bound_of_sample}, the corresponding sufficient number of shots is
\begin{equation}\label{eq:bound_of_shot_fullver}
    N_S = \frac{\log[(1-\calL^{-1}(\tilde\varepsilon^2))N/\eta]}{\calL'(\calL^{-1}(\tilde\varepsilon^2))/N}
\end{equation}
which is sufficient to sample these $N_R$ bitstrings with probability at least $1-\eta$, thereby proving Theorem~\ref{theo:sqd}.

In fact, the upper bound in Eq.~\eqref{eq:error-bound} can be improved when the relevant states have real amplitudes in the computational basis. 
We introduce $\ket{\psi''}\coloneqq\ket{\psi}+\ket{\psi_{ x }}$ and, instead of Eq.~\eqref{eq:use_psi_prime}, use
\begin{equation}
    \expval{\hat H}{\psi_{ x }} - \expval{\hat H}{\psi} =  \mel{\psi'}{\hat H}{\psi''},
\end{equation}
where $\Im[\mel{\psi}{\hat H}{\psi_x}]=0$ has been used. 
Applying the Cauchy--Schwarz inequality together with Eq.~\eqref{eq:error-distance}, we obtain the improved bound 
\begin{equation}
    \expval{\hat H}{\nu_0} - \expval{\hat H}{\psi} \leqslant 2\rho(\hat H)\sqrt{\calL(x)}.
\end{equation}
Accordingly, one may redefine $\tilde{\varepsilon}=\frac{\varepsilon}{2\rho(\hat H)}$, and the subsequent argument proceeds analogously.

Now, we estimate how the sufficient subspace dimension and number of shots scale with the system size when an analytical form of the Lorenz curve is available. 
As a first example, suppose that the configuration weights decay exponentially as 
\begin{equation}
|c_I|^2 \propto e^{-\lambda(N-I)},
\qquad \lambda N\gg 1,
\end{equation}
where $\lambda$ controls the concentration of the distribution, with larger $\lambda$ corresponding to stronger sparsity. 
In this regime, the Lorenz curve is approximately given by
\begin{equation}
\mathcal{L}(x)\simeq e^{-\lambda(1-x)N},
\end{equation}
with derivative
\begin{equation}
\mathcal{L}'(x)\simeq \lambda N e^{-\lambda(1-x)N},
\end{equation}
and inverse function
\begin{equation}
\mathcal{L}^{-1}(y)\simeq 1-\frac{\log(1/y)}{\lambda N},
\end{equation}
all as functions of the normalized index $x=I/N$.
Substituting these expressions into Theorem~\ref{theo:sqd}, we obtain 
\begin{gather}
        N_R = \frac{\log(1/\tilde\varepsilon^{2})}{\lambda}, 
        \label{eq:N_R-exp-decay} \\
        N_S = \frac{1}{\lambda\tilde\varepsilon^{2}}\log(\frac{\log(1/\tilde\varepsilon^{2})}{\lambda\eta})
            = \frac{1}{\lambda\tilde\varepsilon^{2}}\log(\frac{N_R}{\eta}). 
            \label{eq:N_S-exp-decay}
\end{gather} 
Consequently, the Gini coefficient can be evaluated as
\begin{equation}
1-G
=
2\int_0^1 \mathcal{L}(x)\,dx
=
\frac{2(1-e^{-\lambda N})}{\lambda N},
\end{equation}
which reduces to
\begin{equation}
1-G \simeq \frac{2}{\lambda N}
\end{equation}
in the regime $\lambda N\gg 1$.

This expression reflects the role of the decay rate $\lambda$:
Smaller $\lambda$ means slower decay of the configuration weights, a more balanced distribution (i.e. less sparse), and hence a smaller Gini coefficient.
In particular, when $\lambda$ is independent of system size, the Gini coefficient exhibits the same scaling as the sparsest-case result in Eq.~\eqref{eq:most-sparse}. 
Using the asymptotic relation $1-G\simeq 2/(\lambda N)$, Eqs.~\eqref{eq:N_R-exp-decay} and \eqref{eq:N_S-exp-decay} can be rewritten as
\begin{gather}
        N_R = \frac{(1-G)N}{2}\log(1/\tilde\varepsilon^{2}), \\
        N_S = \frac{(1-G)N}{2\tilde\varepsilon^{2}}\log(\frac{N_R}{\eta}).
\end{gather}

As a second example, consider the power-law decay 
\begin{equation}
|c_I|^2 \propto \frac{1}{(N-I+\lambda^{-1})^\gamma},
\qquad \lambda>0,\ \gamma>1,\ \lambda N\gg 1.
\end{equation}
As in the exponential case, the rate parameter $\lambda$ controls the sparsity of the distribution, with larger $\lambda$ corresponding to stronger concentration on the most prominent configurations. 
The Lorenz curve, its derivative, and its inverse are then
\begin{equation}
\mathcal{L}(x)\simeq(\lambda N)^{1-\gamma}\left(1+\frac{1}{\lambda N}-x\right)^{1-\gamma},
\end{equation}
\begin{equation}
\mathcal{L}'(x)\simeq(\gamma-1)(\lambda N)^{1-\gamma}\left(1+\frac{1}{\lambda N}-x\right)^{-\gamma},
\end{equation}
and
\begin{equation}
\mathcal{L}^{-1}(y)\simeq1+\frac{1}{\lambda N}-\frac{1}{\lambda N}y^{\frac{1}{1-\gamma}},
\end{equation}
respectively.
Combining these expressions with Theorem~\ref{theo:sqd}, we obtain 
\begin{gather}
    N_R = \frac{\tilde\varepsilon^{\frac{2}{1-\gamma}}-1}{\lambda}, \label{eq:N_R-power-decay} \\
    N_S = \frac{1}{(\gamma-1)\lambda \tilde\varepsilon^{\frac{2\gamma}{\gamma-1}}} \log(\frac{N_R}{\eta}). \label{eq:N_S-power-decay}
\end{gather}

In this case, the Gini coefficient satisfies 
\begin{equation}
1-G=\frac{2}{\lambda N}\log(1+\lambda N)
\qquad \text{for } \gamma=2,
\end{equation}
and
\begin{equation}
1-G=\frac{2}{(\gamma-2)\lambda N}
+O\!\left(\frac{1}{(\lambda N)^{\gamma-1}}\right)
\qquad \text{for } \gamma>2,
\end{equation}
which is surprisingly similar to the exponential-decay case, indicating the usefulness of the Gini coefficient as a measure of wavefunction sparsity. 
Substituting the Gini coefficient for the rate parameter and ignoring higher-order corrections, we obtain from Eqs.~\eqref{eq:N_R-power-decay} and \eqref{eq:N_S-power-decay} that, for $\gamma>2$, 
\begin{gather}
    N_R = \frac{(1-G)N(\gamma-2)}{2}(\tilde\varepsilon^{\frac{2}{1-\gamma}}-1),\\
    N_S = \frac{(1-G)N(\gamma-2)}{2(\gamma-1) \tilde\varepsilon^{\frac{2\gamma}{\gamma-1}}} \log(\frac{N_R}{\eta}).
\end{gather}

These analytical examples clarify that the sampling overhead of standard SQD is controlled by the system-size dependence of the Lorenz curve and, in particular, by the scaling of the Gini coefficient.
When $(1-G)N$ grows rapidly with system size, both the sufficient sampled-subspace dimension $N_R$ and the required number of shots $N_S$ become large, leading to severe sampling overhead.
The role of FSQD is to modify this scaling behavior at the level of the measurement distribution itself.
By applying a quantum filter, FSQD aims to transform the original sampler into a filtered one whose probability distribution is substantially sparser in the computational basis, thereby reducing $(1-G)N$ and weakening its system-size dependence (see Fig.~\ref{fig:sqd-sparsity} for a numerical demonstration).
Ideally, this corresponds to a smaller effective exponent $g$ in the scaling relation $(1-G)N\propto 2^{gn+c}$, or even to a system-size-independent Gini coefficient.
In this way, FSQD can mitigate the sampling overhead of standard SQD not merely by reducing the sampled-subspace dimension at fixed distribution, but by reshaping the distribution itself so that the basis states relevant for subspace construction are sampled much more efficiently.

\section{Performance of the MPS-based circuit encoding method} \label{appx:III}

In this appendix, we assess the performance of the MPS-based circuit-encoding method used in Secs.~\ref{sec:ising-model} and \ref{sec:expr}. 
Our purpose is to quantify how accurately this method can encode (i) the ground-state filter circuit, (ii) the unitary approximation to the projector used in the sequential projected-sampler construction, and (iii) the directly encoded projected sampler $\ket{\Psi_{\rm s}}$. 
Unless otherwise stated, the results shown in this appendix are obtained for the quantum Ising model defined in Eq.~\eqref{eq:ising-model}.

To align the notation with the main text, we distinguish three levels of description of the ground state. 
The state $\ket{\psi_g}$ denotes the exact ground state of $\hat H$. 
The state $\ket{\psi_b(\chi)}$ denotes an MPS approximation to $\ket{\psi_g}$ with finite bond dimension $\chi$, and hence describes the classical approximation error introduced by truncating the bond dimension. 
Finally, $\ket{\tilde\psi_g}$ denotes the circuit-encoded approximation to $\ket{\psi_g}$, which is also used in the hardware experiments. 
Accordingly, the ideal filtered state is
$\ket{\psi_g'}$, defined in Eq.~(\ref{eq:psip}), 
while the ideal projected state $\ket{\bar{\bm0}}_g$ is defined in Eq.~\eqref{eq:0_bar}.

\begin{figure}[tbh]
    \includegraphics[width=\linewidth]{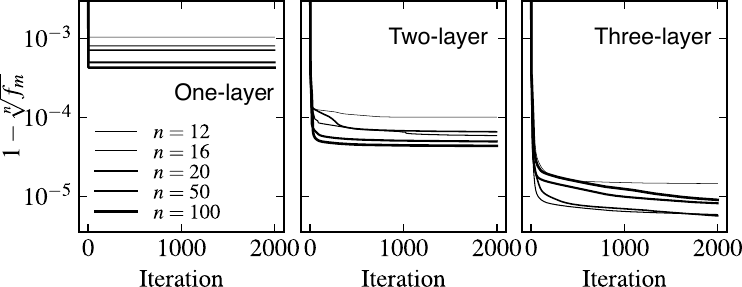}
    \caption{
      Optimization trajectories of the MPS-based circuit-encoding algorithm for system sizes $n=12,16,20,50,100$.
      From left to right, the encoded circuits consist of one, two and three brick-wall layers of local two-qubit unitaries. 
      One iteration corresponds to a single forward sweep followed by a single backward sweep. 
      }
    \label{fig:mps-trajectory}
\end{figure}

We first examine the quality of the circuit encoding of the ground state itself, namely the approximation
\begin{equation}
\ket{\psi_g}\approx \hat U_{\rm MPS}\ket{\bm0}, 
\end{equation}
where $\hat U_{\rm MPS}$ is the circuit obtained by the MPS-circuit-encoding method described in Sec.~\ref{sec:automatic}. 
Note that $\hat U_{\rm MPS}$ is used both as a circuit filter and, when applied to $\ket{\bm0}$, as a circuit-based approximation to the ground state $|\tilde\psi_g\rangle$. 
Figure~\ref{fig:mps-trajectory} shows the corresponding optimization trajectories for system sizes $n=12,16,20,50,100$ and for circuits composed of one, two, and three brick-wall layers of local two-qubit unitaries. 
As a performance measure, we plot the infidelity per site, defined by $1-\sqrt[n]{f_m}$, as a function of the iteration step. 
Here, one iteration step corresponds to a forward sweep followed by a backward sweep in the MPS-circuit-encoding algorithm (see Fig.~\ref{fig:tn-alg} and Algorithm~\ref{alg:two}). 
The objective function $f_m$ is defined in Eq.~(\ref{eq:mps-objective-func}); in the present context, it measures the overlap between the target state and the encoded state, i.e., $f_m=|\langle\psi_g|\hat U_{\rm MPS}\ket{\bm0}|$.

As shown in Fig.~\ref{fig:mps-trajectory}, a single iteration step is sufficient to achieve convergence for the one-layer circuit encoding. For the two-layer encoding, approximately 2000 iteration steps are required to obtain converged circuits, but the resulting infidelity per site is an order of magnitude smaller than in the one-layer case. 
For the three-layer encoding, the optimization has not yet fully converged after 2000 iteration steps, yet the infidelity per site is already nearly one order of magnitude smaller than in the two-layer case. 
The one- and two-layer circuit filters $\hat U_{\rm MPS,I}$ and $\hat U_{\rm MPS,II}$ used in the numerical simulations of Sec.~\ref{sec:ising-model} and the quantum-hardware experiments of Sec.~\ref{sec:expr} are taken directly from the corresponding one- and two-layer circuit encodings obtained here. 
By contrast, the circuit-encoded ground state $|\tilde\psi_g\rangle$ used in the hardware experiments is prepared using the three-layer encoding for $n=20$ and the two-layer encoding for $n=50$ and $n=100$,  also obtained here. 
\begin{figure}[tbh]
    \includegraphics[width=\linewidth]{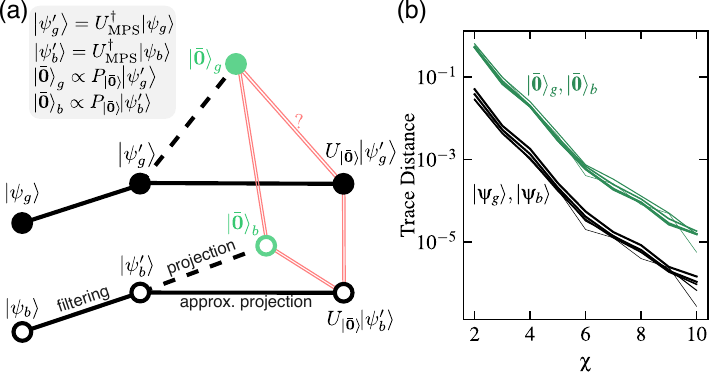}
    \caption{
        (a) Schematic relation among the quantum states appearing in the sequential projected-sampler construction. 
        Filled circles denote ideal target states that are generally not classically accessible, while open circles denote approximate states that are classically accessible in practice. 
        Dashed lines represent the non-unitary projection onto the subspace orthogonal to $\ket{\bm0}$. 
        (b) Trace distance as a function of the maximum bond dimension $\chi$ used in constructing $\ket{\psi_b(\chi)}$, for several relevant state pairs and system sizes $n=12,16,20,50,100$. 
        Here, $\hat U_{\rm MPS}$ is chosen to be the two-layer circuit filter obtained in Fig.~\ref{fig:mps-trajectory}.
    }
    \label{fig:trace-distance}
\end{figure}

We next examine the sequential projected-sampler construction based on the unitary approximation $\hat U_{\ket{\bar{\bm0}}}$ to the non-unitary projector $\hat P_{\ket{\bar{\bm0}}}$. 
In practice, instead of the exact ground state $\ket{\psi_g}$, we start from the approximate MPS state $\ket{\psi_b(\chi)}$ and define the filtered state 
\begin{equation}
\ket{\psi_b'}\coloneqq \hat U_{\rm MPS}^{\dagger}\ket{\psi_b(\chi)}, 
\end{equation}
where $\hat U_{\rm MPS}$ is obtained from the MPS-based circuit encoding discussed above. 
The corresponding normalized projected state is
\begin{equation}\label{eq:0_bar_b_mps}
\ket{\bar{\bm0}}_b
\coloneqq
\frac{\hat P_{\ket{\bar{\bm0}}}\ket{\psi_b'}}
{\|\hat P_{\ket{\bar{\bm0}}}\ket{\psi_b'}\|_2}.
\end{equation}
Our goal is to approximate the action of $\hat P_{\ket{\bar{\bm0}}}$ on $\ket{\psi_b'}$ by a unitary $\hat U_{\ket{\bar{\bm0}}}$ such that
\begin{equation} \label{eq:0b}
\ket{\bar{\bm0}}_b \approx \hat U_{\ket{\bar{\bm0}}}\ket{\psi_b'},
\end{equation}
using the MPS-based circuit-encoding method. 
The same unitary is then applied to the ideal filtered state $\ket{\psi_g'}$ in order to assess the resulting algorithmic error with respect to the ideal projected state $\ket{\bar{\bm0}}_g$.


To quantify this error, we use the trace distance
\begin{equation}
    \mathcal D(\ket{\psi_1},\ket{\psi_2})
    \coloneqq
    \sqrt{1-\abs{\braket{\psi_1}{\psi_2}}^2},
\end{equation}
which measures the distance between two pure states~\cite{nielsen2010quantum}. 
The relevant algorithmic error in the sequential projected-sampler construction is therefore the distance between the encoded projected state and the ideal projected state,
\begin{equation}
    \mathcal D\!\left(\hat U_{\ket{\bar{\bm0}}}\ket{\psi_g'},\ket{\bar{\bm0}}_g\right).
\end{equation}

As illustrated schematically in Fig.~\ref{fig:trace-distance}(a), the triangle inequality implies
\begin{equation}\label{eq:trace-distance}
\begin{aligned}
    \mathcal D\!\left(\hat U_{\ket{\bar{\bm0}}}\ket{\psi_g'},\ket{\bar{\bm0}}_g\right)
    \leqslant\;&
    \mathcal D\!\left(\hat U_{\ket{\bar{\bm0}}}\ket{\psi_g'},\hat U_{\ket{\bar{\bm0}}}\ket{\psi_b'}\right)\\
    &+\mathcal D\!\left(\hat U_{\ket{\bar{\bm0}}}\ket{\psi_b'},\ket{\bar{\bm0}}_b\right)\\
    &+\mathcal D\!\left(\ket{\bar{\bm0}}_g,\ket{\bar{\bm0}}_b\right).
\end{aligned}
\end{equation}
Equation~\eqref{eq:trace-distance} follows from applying the triangle inequality twice, with the intermediate states $\hat U_{\ket{\bar{\bm0}}}\ket{\psi_b'}$ and $\ket{\bar{\bm0}}_b$ inserted between $\hat U_{\ket{\bar{\bm0}}}\ket{\psi_g'}$ and $\ket{\bar{\bm0}}_g$. 
Using the invariance of the trace distance under unitary transformations~\cite{Li2017Efficient}, the first term becomes
\begin{equation}
    \mathcal D\!\left(\hat U_{\ket{\bar{\bm0}}}\ket{\psi_g'},\hat U_{\ket{\bar{\bm0}}}\ket{\psi_b'}\right)
    =
    \mathcal D\!\left(\ket{\psi_g},\ket{\psi_b(\chi)}\right).
\end{equation}
Thus, Eq.~\eqref{eq:trace-distance} separates the total algorithmic error into contributions from the finite-bond-dimension error of the MPS approximation, the circuit-encoding error of the projector approximation, and the mismatch between the ideal and approximate projected target states. 

Figure~\ref{fig:trace-distance}(b) shows that the distance $\mathcal D(\ket{\bar{\bm0}}_g,\ket{\bar{\bm0}}_b)$ decreases systematically with increasing bond dimension $\chi$. 
This indicates that the third contribution in Eq.~\eqref{eq:trace-distance} can be effectively controlled by improving the MPS approximation to the ground state. 
Overall, the sequential projected-sampler error is governed jointly by the accuracy of $\ket{\psi_b(\chi)}$ and the circuit-encoding quality of $\hat U_{\ket{\bar{\bm0}}}$. 
Since all states indicated by open circles in Fig.~\ref{fig:trace-distance}(a) are classically simulable, the sequential projected-sampler construction provides a practical route to preparing approximate projected samplers.


\begin{figure}[tbh]
    \includegraphics[width=\linewidth]{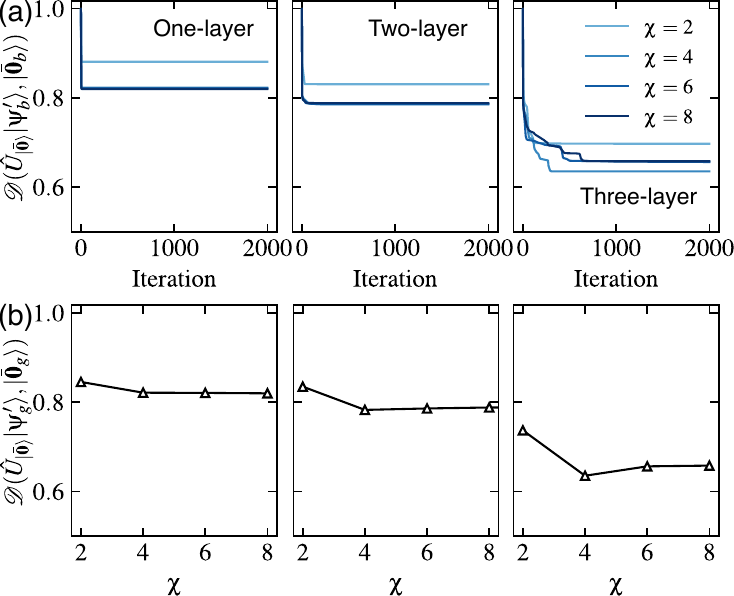}
    \caption{
        (a) Optimization trajectories of the MPS-based circuit-encoding algorithm for approximating the projector at system size $n=20$, with maximum bond dimension $\chi=2,4,6,8$ used in constructing $\ket{\psi_b(\chi)}$. 
        One iteration corresponds to a single forward sweep followed by a single backward sweep.
        (b) Trace distance between the encoded state $\hat U_{\ket{\bar{\bm0}}}\ket{\psi_g'}$ and the ideal projected state $\ket{\bar{\bm0}}_g$. 
        Here, $\hat U_{\ket{\bar{\bm0}}}$ is obtained from the optimization in (a) for $\chi=6$, and the same two-layer filter used to construct $\ket{\psi_b'}$ in (a) is also used in defining $\ket{\psi_g'}$. 
        From left to right, the encoded circuits for $\hat U_{\ket{\bar{\bm0}}}$ use one, two, and three brick-wall layers of local two-qubit unitaries.
    }
    \label{fig:projector-trajectory}
\end{figure}

We now examine the quality of the circuit encoding of the projector $\hat P_{\ket{\bar{\bm0}}}$ according to Eq.~\eqref{eq:0b}. 
Here, the encoded unitary $\hat U_{\ket{\bar{\bm0}}}$ is obtained by the MPS-based circuit-encoding method, and the filtered input state $\ket{\psi_b'}$ is generated by applying the two-layer filter $\hat U_{\rm MPS,II}^{\dagger}$ to $\ket{\psi_b(\chi)}$. 
Figure~\ref{fig:projector-trajectory}(a) shows the optimization trajectories of the trace distance 
\[
\mathcal D\!\left(\hat U_{\ket{\bar{\bm0}}}\ket{\psi_b'},\ket{\bar{\bm0}}_b\right)
\]
in the MPS-based circuit-encoding algorithm for $\chi=2,4,6,8$ and for encoded circuits $\hat U_{\ket{\bar{\bm0}}}$ composed of one, two, and three brick-wall layers of local two-qubit unitaries. 
All cases converge within 2000 iteration steps, and increasing the number of circuit layers generally reduces the trace distance. 
Figure~\ref{fig:projector-trajectory}(b) directly evaluates the resulting error
\[
\mathcal D\!\left(\hat U_{\ket{\bar{\bm0}}}\ket{\psi_g'},\ket{\bar{\bm0}}_g\right)
\]
for $\chi=6$ by applying the converged unitary $\hat U_{\ket{\bar{\bm0}}}$ to the ideal filtered state $\ket{\psi_g'}$. 
The resulting values closely track the converged trace distances shown in Fig.~\ref{fig:projector-trajectory}(a). 
We therefore conclude that, in the sequential projected-sampler construction, the dominant algorithmic error is determined primarily by the encoding quality of $\hat U_{\ket{\bar{\bm0}}}$ once the MPS approximation is sufficiently accurate.

\begin{figure}[tbh]
    \includegraphics[width=\linewidth]{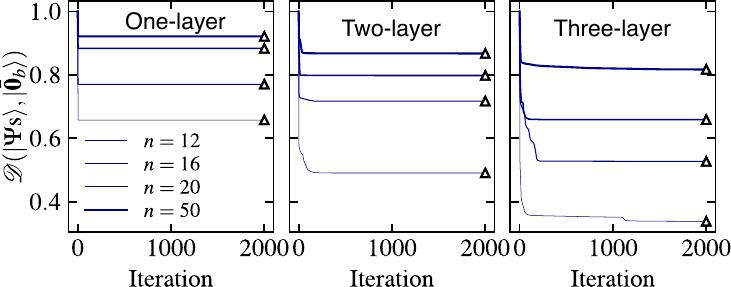}
    \caption{
        Optimization trajectories of the MPS-based circuit-encoding algorithm for direct encoding of the projected sampler $\ket{\Psi_{\rm s}}$ for system sizes $n=12,16,20,50$. 
        From left to right, the encoded circuits use one, two, and three brick-wall layers of local two-qubit unitaries. 
        One iteration corresponds to a single forward sweep followed by a single backward sweep. 
        Triangle markers indicate the trace distance between the converged $\ket{\Psi_{\rm s}}$ and the ideal projected state $\ket{\bar{\bm0}}_g$. 
        The bond dimension of $\ket{\psi_b(\chi)}$ is fixed to $\chi=6$.
            }
    \label{fig:sampler-trajectory}
\end{figure}

Finally, we consider the alternative strategy of directly encoding the projected sampler itself. 
Specifically, we directly encode $\ket{\bar{\bm0}}_b$ into a quantum circuit to obtain the sampler $\ket{\Psi_{\rm s}}$. 
In this analysis, we fix $\chi=6$ and use the two-layer circuit filter $\hat U_{\rm MPS,II}^{\dagger}$. 
Figure~\ref{fig:sampler-trajectory} shows the optimization trajectories of the trace distance for system sizes $n=12,16,20,50$ and for circuits with one, two, and three brick-wall layers of local two-qubit unitaries. 
After convergence, we evaluate the trace distance between the encoded sampler $\ket{\Psi_{\rm s}}$ and the ideal projected state $\ket{\bar{\bm0}}_g$. 
As in Fig.~\ref{fig:projector-trajectory}, the trace distances with respect to $\ket{\bar{\bm0}}_b$ and $\ket{\bar{\bm0}}_g$ are very similar, and increasing the number of circuit layers consistently improves the encoding quality. 
For $n=20$, the converged trace distances are comparable to those in Fig.~\ref{fig:projector-trajectory}(b) when the same number of encoding layers is used. 
For larger system sizes, however, the performance of direct encoding deteriorates, highlighting the need for improved circuit ans\"atze for efficient sampler preparation in larger systems.


\section{Experimental details}\label{appx:IV}

In this appendix, we summarize additional details of the quantum-hardware experiments performed on IBM Quantum's Heron R2 processor \texttt{ibm\_kobe}. 
Figure~\ref{fig:sqd-depth}(a) shows the qubit layouts and coupling paths used in the $n=20$, $50$, and $100$ implementations. 
The colored qubits correspond to those used for preparing the sampler $\ket{\Psi_{\rm s}}$. 
The colors on the coupling links indicate the CZ-gate error, while the colors of the qubits indicate the readout error. 
The relevant single- and two-qubit device properties at the time of the experiments are summarized in Table~\ref{tab:hardware}.


\begin{figure}[tbh]
    \includegraphics[width=\linewidth]{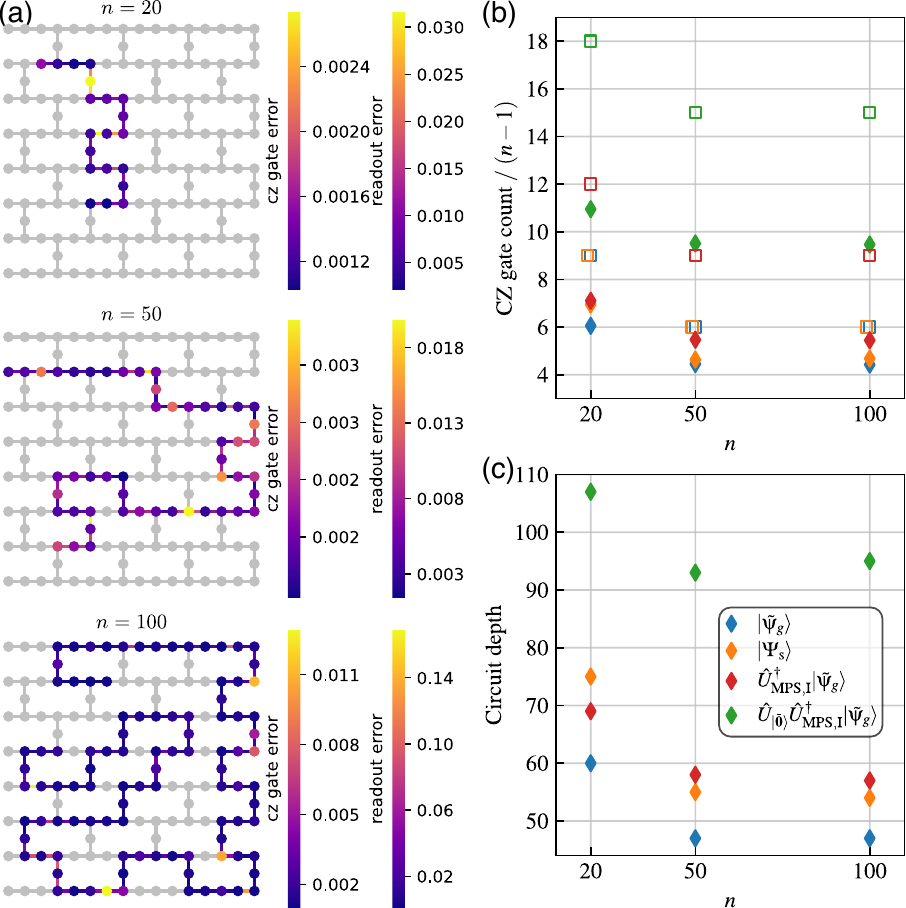}
    \caption{
      (a) Qubit layouts and coupling paths for the $20$-, $50$-, and $100$-qubit implementations on IBM Quantum's \texttt{ibm\_kobe}. 
      The colors on the links indicate the CZ-gate error, while the colors of the qubits indicate the readout error.  
      (b) Two-qubit CZ gate count of the transpiled quantum circuits (solid diamonds) for the samplers $\ket*{\tilde\psi_g}$, $\ket{\Psi_{\rm s}}$, $\umpsdI\ket*{\tilde\psi_g}$, and $\hat U_{\ket{\bar{\bm0}}}\umpsdI\ket*{\tilde\psi_g}$, plotted as functions of system size. 
      For comparison, the corresponding count for the naive circuit implementation of these sampler states is also shown by open squares. 
      Here, $\hat U_{\rm MPS,I}^{\dagger}$ denotes the one-layer filter. 
      The circuits preparing $\ket*{\tilde\psi_g}$ and $\ket{\Psi_{\rm s}}$ use three brick-wall layers of local two-qubit gates for $n=20$ and two such layers for $n=50$ and $100$, whereas $\hat U_{\ket{\bar{\bm0}}}$ is implemented using two brick-wall layers. 
      (c) Same as (b), but for the transpiled circuit depth, which includes single-qubit layers as well as two-qubits layers. 
      }
    \label{fig:sqd-depth}
\end{figure}

In the hardware implementation, we prepare the sampler states $\ket*{\tilde\psi_g}$, $\hat U_{\rm MPS,I}^{\dagger}\ket*{\tilde\psi_g}$, $\hat U_{\ket{\bar{\bm0}}}\hat U_{\rm MPS,I}^{\dagger}\ket*{\tilde\psi_g}$, and $\ket{\Psi_{\rm s}}$ on the quantum computer. 
A general two-qubit unitary in $SU(4)$ can be synthesized using an elementary gate set consisting of single-qubit rotations and three CNOT gates. 
In the present study of the quantum Ising model, however, all local two-qubit unitaries appearing in the encoded circuits can be chosen to be real, so that they belong to $O(4)$ in general. 
Moreover, we observe that approximately half of the local two-qubit unitaries obtained in our encoding procedure belong to $SO(4)$ and can therefore be decomposed using only two CNOT gates.
Accordingly, in a naive implementation of these sampler circuits, the total two-qubit gate count is bounded above by $3\times(n-1)\times$ (the total number of brick-wall layers of local two-qubit unitaries), as indicated by the open squares in Fig.~\ref{fig:sqd-depth}(b). 
%
Since the circuits are arranged in a brick-wall structure, the corresponding two-qubit gate depth is proportional to the number of encoding layers and thus remains essentially independent of system size once the number of layers is fixed. 
What degrades with increasing system size is not the nominal circuit depth itself, but rather the accuracy of the circuit encoding obtained with a fixed number of layers, as discussed in Appendix~\ref{appx:III}.

To reduce the two-qubit gate count further, we use \textit{Qiskit}'s transpiler with \texttt{optimization\_level=3}~\cite{Qiskit} to compile each circuit into 10 candidates compatible with the instruction-set architecture (ISA) of \texttt{ibm\_kobe}, and select the candidate with the smallest two-qubit gate count and circuit depth. 
As seen in Figs.~\ref{fig:sqd-depth}(b) and \ref{fig:sqd-depth}(c), the  sequentially encoded projected filtered sampler $\hat U_{\ket{\bar{\bm 0}}}\hat U_{\rm MPS,I}^{\dagger}\ket*{\tilde\psi_g}$ has the largest two-qubit gate count and circuit depth among the four samplers, whereas $\ket*{\tilde\psi_g}$ and $\ket{\Psi_{\rm s}}$ remain comparatively shallow over the system sizes considered.

\begin{table}[htb]
\caption{\label{tab:hardware}Single- and two-qubit properties of \texttt{ibm\_kobe} used in this study, including the qubit relaxation time $T_1$, qubit dephasing time $T_2$, readout error, and CZ error. }
\begin{ruledtabular}
\begin{tabular}{lllll}
{Property} & {Min.} & {Max.} & {Median} & {Avg.} \\
\hline
\multicolumn{5}{c}{$n=20$} \\
\hline
$T_1$ [$\mu$s] & $29.62$ & $360.2$ & $240.4$ & $216.2\pm81.82$ \\
$T_2$ [$\mu$s] & $20.05$ & $315.8$ & $147.0$ & $142.4\pm90.63$ \\
Readout [$\%$] & $0.220$ & $3.15$ & $0.537$ & $0.676\pm0.610$ \\
CZ [$\%$] & $0.103$ & $0.273$ & $0.150$ & $0.165\pm0.048$ \\
\hline
\multicolumn{5}{c}{$n=50$} \\
\hline
$T_1$ [$\mu$s] & $75.16$ & $402.5$ & $278.6$ & $272.4\pm73.87$ \\
$T_2$ [$\mu$s] & $19.53$ & $503.2$ & $193.7$ & $194.7\pm133.9$ \\
Readout [$\%$] & $0.146$ & $1.978$ & $0.464$ & $0.619\pm0.379$ \\
CZ [$\%$] & $0.097$ & $0.338$ & $0.154$ & $0.161\pm0.051$ \\
\hline
\multicolumn{5}{c}{$n=100$} \\
\hline
$T_1$ [$\mu$s] & $2.967$ & $410.0$ & $288.4$ & $276.3\pm73.80$ \\
$T_2$ [$\mu$s] & $4.506$ & $513.0$ & $179.5$ & $177.7\pm127.5$ \\
Readout [$\%$] & $0.146$ & $16.84$ & $0.610$ & $1.367\pm2.676$ \\
CZ [$\%$] & $0.103$ & $1.289$ & $0.164$ & $0.255\pm0.205$ \\
\end{tabular}
\end{ruledtabular}
\end{table}

\bibliography{FSQD.bib}

\end{document}